\def\beq{\begin{equation}}
\def\eeq{\end{equation}}
\def\beqa{\begin{eqnarray}}
\def\eeqa{\end{eqnarray}}
\def\beqan{\begin{eqnarray*}}
\def\eeqan{\end{eqnarray*}}
\def\R{{\mathbb{R}}}
\def\argmin{\mathop{\mathrm{arg\,min}}}
\def\argmax{\mathop{\mathrm{arg\,max}}}
\def\x{\times}
\def\MID{\,|\,}
\def\Ecal{{\mathcal E}}
\newtheoremstyle{assumption}{6pt}{6pt}{\rm}{}{\sffamily}{ }{ }{}
\theoremstyle{assumption}
\newtheorem{assumption}[theorem]{\sc Assumption}
\def\SNR{\mbox{\small \sffamily SNR}}
\def\arr{\rightarrow}
\def\Exp{\mathbb{E}}
\def\var{\mbox{\bf var}}
\def\tm1{t\! - \! 1}
\def\tp1{t\! + \! 1}
\def\m1{\! - \! 1}
\def\p1{\! + \! 1}
\newcommand{\bbf}{\mathbf{b}}
\newcommand{\dbf}{\mathbf{d}}
\newcommand{\pbf}{\mathbf{p}}
\newcommand{\qbf}{\mathbf{q}}
\newcommand{\ubf}{\mathbf{u}}
\newcommand{\ubfhat}{\widehat{\mathbf{u}}}
\newcommand{\vbf}{\mathbf{v}}
\newcommand{\vbfhat}{\widehat{\mathbf{v}}}
\newcommand{\xbf}{\mathbf{x}}
\newcommand{\ybf}{\mathbf{y}}
\newcommand{\zbf}{\mathbf{z}}
\newcommand{\Abf}{\mathbf{A}}
\newcommand{\Ibf}{\mathbf{I}}
\newcommand{\Sbf}{\mathbf{S}}
\newcommand{\Vbf}{\mathbf{V}}
\newcommand{\Wbf}{\mathbf{W}}
\def\alphabar{\overline{\alpha}}
\def\nubar{\overline{\nu}}
\def\lambdabar{\overline{\lambda}}
\begin{document}

\title{Iterative Reconstruction of Rank-One Matrices in Noise}

\shorttitle{Rank-One Reconstruction} 
\shortauthorlist{Fletcher, Rangan} 

\author{{\sc Alyson K.\ Fletcher}$^*$,\\[2pt]
University of California, Santa Cruz, CA 95064 \\
$^*${\email{akfletch@ucsc.edu}}\\[2pt]
{\sc and}\\[6pt]
{\sc Sundeep Rangan}\\[2pt]
New York University Polytechnic School of Engineering, Brooklyn, NY 11201\\
{srangan@nyu.edu}
}

\maketitle

\begin{abstract}
{We consider the problem of estimating a rank-one matrix in Gaussian noise
under a probabilistic model for the left and right factors of the matrix.
The probabilistic model can impose constraints on the factors including
sparsity and positivity that arise commonly in learning problems.
We propose a family of algorithms that reduce the problem to a sequence of
scalar estimation computations.  These algorithms are
similar
to approximate message passing techniques based on Gaussian approximations of loopy
belief propagation that have been used recently in compressed sensing.
Leveraging analysis methods by Bayati and Montanari, we
show that the asymptotic behavior of the algorithm is described
by a simple scalar equivalent model, where the distribution of the estimates at each iteration is identical
to certain scalar estimates of the variables
in Gaussian noise.  Moreover, the effective Gaussian noise
level is described by a set of state evolution equations.
The proposed approach to deriving algorithms
thus provides a computationally simple and general method
for rank-one estimation problems with a precise analysis in certain high-dimensional
settings.}
{Matrix factorization, Bayesian estimation, approximate message passing}
\\
2000 Math Subject Classification: 34K30, 35K57, 35Q80,  92D25
\end{abstract}

\section{Introduction} \label{sec:intro}
We consider the problem
of estimating vectors $\ubf_0 \in \R^m$ and $\vbf_0 \in\R^n$
from a matrix $\Abf\in \R^{m \x n}$ of the form
\beq \label{eq:ArankOne}
    \Abf = \ubf_0\vbf^T_0 + \sqrt{m}\Wbf,
\eeq
where $\Wbf$ represents some unknown noise and $\sqrt{m}$ is a normalization factor.
The problem can be considered as a rank-one special case of finding a low-rank
matrix in the presence of noise.  Such low-rank estimation
problems arise in a range
of applications including blind channel estimation 
\citep{Ding:97},
antenna array processing \citep{JafGold:04},
subspace system identification \citep{Katayama:10},
and principal component or factor analysis \citep{Jolliffe:86}.

When the noise term $\Wbf$ is zero, the vector pair $(\ubf_0,\vbf_0)$
can be recovered exactly, up to a scaling, from the maximal
left and right singular vectors
of $\Abf$ \citep{HornJ:85}.
However, in the presence of noise, the rank-one matrix can in general only be estimated
approximately.  In this case, \emph{a priori} information or
constraints on $(\ubf_0,\vbf_0)$ may improve the estimation.
Such constraints arise, for example, in factor analysis in statistics,
where one of the factors is often constrained to be either positive or sparse
\citep{CadJoll:95}.  Similar sparsity constraints occur in the
problem of dictionary learning \citep{OlshausenF:96}.
Also, in digital communications,
one of the factors could come from a discrete QAM constellation.

In this paper, we impose the constraints on $\ubf_0$ and $\vbf_0$
in a Bayesian setting where $\ubf_0$ and $\vbf_0$ are assumed to be
independent from one another with i.i.d.\ densities of the form
\beq \label{eq:puvDist}
    p_{\ubf_0}(\ubf_0) = \prod_{i=1}^m p_{U_0}(u_{0i}), \quad
    p_{\vbf_0}(\vbf_0) = \prod_{j=1}^n p_{V_0}(v_{0j}),
\eeq
for some scalar random variables $U_0$ and $V_0$.  The
noise $\Wbf$
in \eqref{eq:ArankOne} is assumed to have i.i.d.\ Gaussian components where
$W_{ij} \sim {\mathcal N}(0,\tau_w)$ for some variance $\tau_w > 0$.
Under this model, the posterior density of $\ubf_0$ and $\vbf_0$
is given by
\beq \label{eq:puvA}
    p_{\ubf_0,\vbf_0|\Abf}(\ubf_0,\vbf_0) \propto p_{\ubf_0}(\ubf_0)p_{\vbf_0}(\vbf_0)
    \exp\left[ -\frac{1}{2m\tau_w}\|\Abf-\ubf_0\vbf^T\|^2_F \right],
\eeq
where $\|\cdot\|_F$ is the Frobenius norm.
Given this posterior density, we consider two estimation problems:
\begin{itemize}
\item \emph{MAP estimation:}  In this problem, we wish to find the
\emph{maximum a posteriori} estimates
\beq \label{eq:uvMAP}
    (\ubfhat,\vbfhat) = \argmax_{\ubf,\vbf} p_{\ubf_0,\vbf_0|\Abf}(\ubf,\vbf).
\eeq
\item \emph{MMSE estimation:} In this problem, we wish to find the posterior
mean or, equivalently, the minimum mean squared error estimate,
\beq \label{eq:uvMMSE}
    \ubfhat = \Exp(\ubf_0|\Abf), \quad \vbfhat = \Exp(\vbf_0|\Abf).
\eeq
We may also be interested in the posterior variances and posterior marginals.
\end{itemize}

Exact computation of either of these estimates is generally
intractable.
Even though the components $\ubf_0$ and $\vbf_0$ are assumed to
have independent components,
the posterior density \eqref{eq:puvA} is not, in general, separable due to the
term $\ubf_0\vbf_0^T$.
Thus, the MAP estimate requires a search over $m$ and $n$-dimensional vectors,
$(\ubf_0,\vbf_0)$ and the MMSE estimate requires integration over this $m+n$-dimensional
space.
However, due to the separability assumption on the priors \eqref{eq:puvDist},
it is computationally simpler to alternately estimate the factors $\ubf_0$ and $\vbf_0$.
This form of alternating estimation is used, for example,
in the classic alternating power method for finding maximal singular
values \citep{HornJ:85}
and some alternating methods in sparse or non-negative dictionary learning
\citep{OlshausenF:96,OlshausenF:97,LeeSeung:01,LewickiS:00,AharonEB:06}.

The approach in this work also uses a form of alternating
estimation, but based on a recently-developed powerful class of
algorithms known as Approximate Message Passing or AMP.
AMP methods are derived from Gaussian and quadratic approximations
of loopy belief propagation in graphical models
and were originally used for problems in compressed sensing
\citep{DonohoMM:09,DonohoMM:10-ITW1,DonohoMM:10-ITW2,BayatiM:11,Rangan:10arXiv,Rangan:11-ISIT}.
The AMP methodology has been successfully applied in a range of applications
\citep{FletcherRVB:11,chen2010improved,vila2013hyperspectral,fletcher2014scalable}.
In recent years, there has been growing interest of AMP for
matrix factorization problems as well
\citep{RanganF:12-ISIT,parker2013bilinear,parker2013bilinear2,krzakala2013phase,lesieur2015mmse}.
The problem considered in this paper can be considered as a simple rank one
case of these problems.

Our main contribution in this work is to show that, for the rank one MMSE and MAP estimation
problems,
the proposed IterFac algorithm admits an asymptotically-exact characterization
in the limit when $m,n \arr \infty$ with $m/n$ constant
and the components of the true
vectors $\ubf_0$ and $\vbf_0$ have limiting empirical distributions.
In this scenario, we show that the empirical joint distribution of the components
of $\ubf_0$ and $\vbf_0$
and the corresponding estimates from the IterFac algorithm
are described by a simple \emph{scalar equivalent model}
where the IterFac component estimates are identically distributed to scalar estimates
of the variables corrupted by Gaussian noise. Moreover, the effective Gaussian
noise level in this model
is described by a simple set of scalar \emph{state evolution} (SE) equations.
This scalar equivalent model is common in analyses of AMP methods
\citep{DonohoMM:09,DonohoMM:10-ITW1,DonohoMM:10-ITW2,BayatiM:11,Rangan:10arXiv,Rangan:11-ISIT}
as well as replica analyses of related estimation problems \citep{Tanaka:02,GuoV:05,RanganFG:12-IT}.
From the scalar equivalent model, one can
compute the asymptotic value of almost any component-separable metric
including
mean-squared error or correlation.
Thus, in addition to being computationally simple and general,
the IterFac algorithm admits a precise analysis in the case of Gaussian noise.
Moreover, since fixed points of the IterFac algorithm correspond, under
suitable circumstances, to local maxima of objectives such as \eqref{eq:uvMAP},
the analysis can be used to characterize the behavior of such minima---even
if alternate algorithms to IterFac are used.

The main analytical tool is a recently-developed technique
by Bayati and Montanari \citep{BayatiM:11}
used in the analysis of AMP algorithms.
This work proved that, in the limit for large Gaussian mixing matrices, the behavior of
AMP estimates can be described by
a scalar equivalent model with effective noise levels defined by certain
scalar state evolution (SE) equations.
Similar SE analyses have appeared in related contexts
\citep{BouCaire:02,MontanariT:06,GuoW:06,GuoW:07,Rangan:10arXiv,Rangan:11-ISIT}.
To prove the SE equations for the IterFac algorithm,
we apply a key theorem from \citep{BayatiM:11}
with a simple change of variables and a slight modification to account
for parameter adaptation.

A conference version of this paper appeared in \citep{RanganF:12-ISIT}.
This paper provides all the proofs along with more detailed discussions
and simulations.
Since the original publication of the conference version in \citep{RanganF:12-ISIT},
several other research groups have extended and built on the work.
Importantly, \citep{deshpande2014information} has shown that in the case
of certain discrete priors, the IterFac algorithm is provably Bayes optimal
in a large system limit.
It was shown in \citep{deshpande2015finding}, that
an algorithm closely related to IterFac  could provide the best known
scaling laws for the hidden clique problem.
More recently, \citep{parker2014bilinear} and
\citep{kabashima2014phase} have used related AMP-type methods for
matrix factorization problems with rank greater than one.

\section{Iterative Rank-One Factorization}

\begin{algorithm}
\caption{Iterative Factorization (IterFac) }
\begin{algorithmic}[1]  \label{algo:iterMax}
\REQUIRE{ Matrix $\Abf \in \R^{m \x n}$, noise level $\tau_w > 0$,
and factor selection functions $G_u(\pbf,\lambda_u)$ and
$G_v(\qbf,\lambda_v)$.}
\STATE{ $t \gets 0$ }
\STATE{ $\mu_u(t) \gets 0$ and select initial values $\ubf(0)$, $\vbf(0)$ }
\REPEAT

\STATE \COMMENT{Update estimate of $\ubf$}
\STATE{Select $\lambda_u(t)$} \label{line:lamu}
\STATE{$\pbf(t) \gets (1/m)\Abf\vbf(t) + \mu_u(t)\ubf(t)$ }
    \label{line:pt}
\STATE{$\ubf(\tp1) \gets G_u(\pbf(t),\lambda_u(t))$} \label{line:Guopt}
\STATE{$\mu_v(t) \gets -(\tau_w/m)\sum_{i=1}^m \partial G_u(p_i(t),\lambda_u(t))/\partial p_i$} \label{line:muu}
\STATE{$~$}

\COMMENT{Update estimate of $\vbf$}
\STATE{Select $\lambda_v(t)$} \label{line:lamv}
\STATE{$\qbf(t) \gets (1/m)\Abf^T\ubf(\tp1) + \mu_v(t)\vbf(t)$ }
    \label{line:qt}
\STATE{$\vbf(\tp1) \gets G_v(\qbf(t),\lambda_v(t))$} \label{line:Gvopt}
\STATE{$\mu_u(\tp1) \gets -(\tau_w/m)\sum_{j=1}^n \partial G_u(q_j(t),\lambda_u(t))/\partial q_j$}
        \label{line:muv}

\UNTIL{Terminate}

\end{algorithmic}
\end{algorithm}

The proposed IterFac algorithm is shown in Algorithm~\ref{algo:iterMax}.
Given a matrix $\Abf \in \R^{m\x n}$, the algorithm outputs
a sequence of estimates
$(\ubf(t),\vbf(t))$, $t=0,1,\ldots$, for $(\ubf_0,\vbf_0)$.
The algorithm has several parameters
including the initial conditions,
the parameters in lines \ref{line:lamu}
and \ref{line:lamv}, the termination condition and, most importantly,
the functions $G_u(\cdot)$ and $G_v(\cdot)$.
In each iteration, the functions $G_u(\cdot)$ and $G_v(\cdot)$
are used to generate the estimates of the factors
$\ubf(t)$ and $\vbf(t)$ and will be called the \emph{factor selection functions}.
Throughout this work,
we assume that the factor selection functions are \emph{separable} meaning that
the act on the componentwise on the vectors $\pbf(t)$ and $\qbf(t)$:
\beq \label{eq:Guvsep}
    u_i(t) = G_u(p_i(t),\lambda_u(t)), \quad
    v_j(\tp1) = G_v(q_j(t),\lambda_v(t)),
\eeq
for some scalar functions $G_u(\cdot)$ and $G_v(\cdot)$.
The choice of the factor selection function will depend on whether IterFac is used
for MAP or MMSE estimation.

\paragraph*{MAP estimation:}
To describe the factor selection functions for the MAP estimation problem
\eqref{eq:uvMAP}, Let $\lambda_u(t)$ and $\lambda_v(t)$ be sequences of pairs
of parameters
\beq \label{eq:lamuvDef}
    \lambda_u(t) := (\gamma_u(t),\nu_u(t)), \quad
    \lambda_v(t) := (\gamma_v(t),\nu_v(t)).
\eeq
Then, for each $t$, consider random vectors $\pbf(t)$ and $\qbf(t)$ given by,
\begin{align} \label{eq:pqAwgn}
    \begin{split}
        &\pbf(t) = \gamma_u(t)\ubf_0 + \zbf_u(t), \quad \ubf_0 \sim p_{\ubf_0}(\ubf_0),
            \quad \zbf_u(t) \sim {\mathcal N}(0,\nu_u(t)\Ibf) \\
        &\qbf(t) = \gamma_v(t)\vbf_0 + \zbf_v(t), \quad \vbf_0 \sim p_{\vbf_0}(\vbf_0),
            \quad \zbf_v(t) \sim {\mathcal N}(0,\nu_v(t)\Ibf).
    \end{split}
\end{align}
The random variables $\pbf(t)$ and $\qbf(t)$ in \eqref{eq:pqAwgn}
are simply scaled versions of the true vectors $\ubf_0$ and $\vbf_0$ with additive white
Gaussian noise (AWGN).
Then, for the MAP problem we take the
factor selection functions to be the MAP estimates of $\ubf_0$ and $\vbf_0$
given the observations $\pbf(t)$ and $\qbf(t)$:
\beq \label{eq:GuvMAP}
        G_u(\pbf(t),\lambda_u(t)) := \argmax_{\ubf_0}
            p(\ubf_0|\pbf(t),\gamma_u(t),\nu_u(t)), \quad
        G_v(\qbf(t),\lambda_v(t)) = \argmax_{\vbf_0}
            p(\vbf_0|\qbf(t),\gamma_v(t),\nu_v(t)).
\eeq
Importantly, due to the separability assumption on the priors \eqref{eq:puvDist},
these MAP estimates can be computed componentwise:
\begin{align} \label{eq:GuvMAPsep}
\begin{split}
    & G_u(p_i,\lambda_u) = \argmin_{u_{0i}} \left[ -\log p_{U_0}(u_{0i})
        + \frac{(\gamma_uu_{0i}-p_i)^2}{2\nu_u} \right], \\
     & G_v(q_j,\lambda_v) = \argmin_{v_{0j}} \left[ -\log p_{V_0}(v_{0j})
        + \frac{(\gamma_vv_{0j}-q_j)^2}{2\nu_v} \right].
\end{split}
\end{align}
Hence, the IterFac algorithm replaces the vector-valued MAP estimation of $\ubf_0,\vbf_0$
from the joint density \eqref{eq:puvA}, with a sequence of scalar MAP estimation
problems along with multiplications by $\Abf$ and $\Abf^T$.

\paragraph*{MMSE Estimation:}  For the MMSE estimation problem, we simply take
the factor selection functions
the MMSE estimates with respect to the random variables \eqref{eq:pqAwgn},
\beq \label{eq:GuvMMSE}
        G_u(\pbf(t),\lambda_u(t)) = \Exp(\ubf_0|\pbf(t),\gamma_u(t),\nu_u(t)), \quad
        G_v(\qbf(t),\lambda_v(t)) = \Exp(\vbf_0|\qbf(t),\gamma_v(t),\nu_v(t)).
\eeq
Again, due to the separability assumption \eqref{eq:puvDist}, these
MMSE estimation problems can be performed componentwise.  Hence MMSE IterFac
reduces the vector MMSE problem to a sequence of scalar problems in Gaussian noise.

\section{Intuitive Algorithm Derivation and Analysis}

\subsection{Algorithm Intuition}
Before we formally analyze the algorithm, it is instructive to understand
the intuition behind the method.
For both the MAP and MMSE versions of IterFac, first observe that
\beq \label{eq:pintuit}
    \pbf(t) \stackrel{(a)}{=} \frac{1}{m}\Abf\vbf(t)+\mu_u(t)\ubf(t)
    \stackrel{(b)}{=} \frac{\vbf_0^T\vbf(t)}{m} \ubf_0
        +  \frac{1}{\sqrt{m}}\Wbf\vbf(t) + \mu_u(t)\ubf(t)
        \stackrel{(c)}{=} \gamma_u(t)\ubf_0 + \zbf_u(t),
\eeq
where (a) follows from line~\ref{line:pt} in Algorithm~\ref{algo:iterMax};
(b) follows from
the assumptions on the measurements \eqref{eq:ArankOne} and in (c), we have used
the definitions
\beq \label{eq:gamzu}
    \gamma_u(t) = \frac{1}{m}\vbf_0^T\vbf(t), \quad
    \zbf_u(t) = \frac{1}{\sqrt{m}}\Wbf\vbf(t) + \mu_u(t)\ubf(t).
\eeq
For the first iteration, $\Wbf$ is a large zero mean matrix independent
of the initial condition $\vbf(0)$.  Also, $\mu_u(0)=0$ and hence
the components of $\zbf_u(0)$ will be
asymptotically Gaussian zero mean variables due to the Central Limit Theorem.
Hence, the vector
$\pbf(0)$ will be distributed as the true vector $\ubf_0$ with Gaussian noise as in
the model \eqref{eq:pqAwgn}.  Therefore, we can take an initial MAP or MMSE
estimate of $\ubf_0$ from the scalar estimation functions in \eqref{eq:GuvMAP}
or \eqref{eq:GuvMMSE}.

Unfortunately, in subsequent iterations with $t>0$,
$\vbf(t)$ is no longer independent of $\Wbf$ and hence $\Wbf\vbf(t)$ will not,
in general, be a Gaussian zero mean random vector.  However, remarkably,
we will show that with the specific choice of $\mu_u(t)$ in Algorithm~\ref{algo:iterMax},
the addition of the term $\mu_u(t)\ubf(t)$ in
\eqref{eq:gamzu} ``debiases" the noise term so that $\zbf_u(t)$ is asymptotically
zero mean Gaussian.  Thus, $\pbf(t)$ continues to be
distributed as in \eqref{eq:pqAwgn} and we can construct MAP or MMSE estimates
of $\ubf_0$ from the vector $\pbf(t)$.  For example, using the MMSE estimation function
\eqref{eq:GuvMMSE} with appropriate selection of the parameters $\lambda_u(t)$,
we obtain that the estimate for $\ubf(\tp1)$ is given by the MMSE estimate
\[
    \ubf(\tp1) = \Exp(\ubf_0|\pbf(t)), \quad \pbf(t)=\gamma_u(t)\ubf_0 + \zbf_u(t).
\]
For the MAP estimation problem,
the MAP selection function \eqref{eq:GuvMAP} computes the MAP estimate for $\ubf_0$.

Similarly, for $\qbf(t)$, we see that
\begin{align}
    \qbf(t) &= \frac{1}{m}\Abf^T\ubf(\tp1)+\mu_v(t)\vbf(t) \nonumber \\
    &= \frac{\ubf_0^T\ubf(\tp1)}{m}\ubf_0
        +  \frac{1}{\sqrt{m}}\Wbf^T\ubf(\tp1) + \mu_v(t)\vbf(t)
        = \gamma_v(r)\ubf_0 + \zbf_v(t), \label{eq:qintuit}
\end{align}
for $\gamma_v(t)=\ubf_0^T\ubf(\tp1)/m$ and
$\zbf_v(t)= (1/\sqrt{m})\Wbf^T\ubf(\tp1) + \mu_v(t)\vbf(t)$.
Then, $\vbf(\tp1)$ is taken as the MMSE or MAP estimate of $\vbf_0$ from
the vector $\qbf(t)$.

Thus, we see that the algorithm attempts to produce a sequence of unbiased
estimates $\pbf(t)$ and $\qbf(t)$ for scaled versions of $\ubf_0$ and $\vbf_0$.
Then, it constructs the MAP or MMSE estimates $\ubf(t)$ and $\vbf(\tp1)$ from
these unbiased estimates.

\subsection{State Evolution Analysis}
The above ``derivation" of the algorithm suggests a possible method
to analyze the IterFac algorithm.  Consider a sequence of problems
indexed by $n$ and suppose that the dimension $m=m(n)$ grows linearly with $n$
in that
\beq \label{eq:betaDef}
    \lim_{n \arr \infty} n/m(n) = \beta
\eeq
for some $\beta > 0$.
For each $n$ and iteration number $t$, define the sets
\beq \label{eq:thetauvi}
    \theta_{u}(t) = \bigl\{ (u_{0i},u_i(t)),
        i=1,\ldots,m \bigr\},  \quad
    \theta_{v}(t) = \bigl\{ (v_{0j},v_j(t)),
        j =1,\ldots,n \bigr\},
\eeq
which are the set of the components of the true vectors $u_{0i}$ and $v_{0j}$
and their corresponding estimates $u_i(t)$ and $v_j(t)$.
To characterize the quality of the estimates, we would like to describe the
distributions of the pairs in $\theta_u(t)$ and $\theta_v(t)$.

Our formal analysis below
will provide an exact characterization of these distributions.
Specifically, we will show that the sets have empirical limits of the
form
\beq  \label{eq:thetauvLim}
    \lim_{n \arr \infty} \theta_{u}(t) \stackrel{d}{=} (U_0,U(t)), \quad
    \lim_{n \arr \infty} \theta_{v}(t) \stackrel{d}{=} (V_0,V(t)),
\eeq
for some random variable pairs $(U_0,U(t)$ and $(V_0,V(t))$.
The precise definition of empirical limits is given in Appendix~\ref{sec:conv},
but loosely, it means that the empirical distribution of the components in $\theta_u(t)$
and $\theta_v(t)$ converge to certain random variable pairs.

In addition, we can inductively derive what the distributions are
for the limits in \eqref{eq:thetauvLim}.
To make matters simple, suppose that the parameters
$\lambda_u(t)$ and $\lambda_v(t)$ are not
selected adaptively but instead given by a fixed sequences $\lambdabar_u(t)$ and
$\lambdabar_v(t)$.  Also, given random variables in the limits of \eqref{eq:thetauvLim},
define the deterministic constants
\beq \label{eq:alphaSE}
    \alphabar_{u0}(t) = \Exp\left[ U(t)^2 \right], \quad 
    \alphabar_{u1}(t) = \Exp\left[ U_0U(t) \right], \quad 
    \alphabar_{v0}(t) = \Exp\left[ V(t)^2 \right], \quad 
    \alphabar_{v1}(t) = \Exp\left[ V_0V(t) \right].
\eeq

Now suppose that the second limit in \eqref{eq:thetauvLim} holds for some $t$.
Then, $\gamma_u(t)$ in \eqref{eq:gamzu} would have the following limit:
\[
    \lim_{n \arr \infty} \gamma_u(t) = \lim_{n\arr \infty} \frac{n}{m} \frac{1}{n}
        \sum_{j=1}^n v_{0j}v_j(t) = \beta \alphabar_{v1}(t).
\]
Also, if we ignore the debiasing term with $\mu_u(t)$ and assume
that $\Wbf$ is independent of $\vbf(t)$, the variance of the components of
$\zbf_u(t)$ in \eqref{eq:gamzu} would be
\[
    \var(z_{ui}(t)) = \frac{1}{m}\sum_{j=1}^n \var(W_{ij}v_j(t))
        = \beta \tau_w \alphabar_{v0}(t).
\]
Thus, we would expect a typical component of $p_i(t)$ in \eqref{eq:pintuit}
to be distributed as
\[
    P(t) = \beta \alphabar_{v1}(t)U_0+Z_u(t), \quad 
     Z_u(t)  \sim  {\mathcal N}(0,\beta\tau_w\alphabar_{v0}(t)).
\]
Due to the separability assumption \eqref{eq:Guvsep}, each
component $u_i(\tp1) = G_u(p_i(t),\lambdabar_u(t))$.  So, we would expect
the components to follow the distribution
\beq \label{eq:USE}
   U(\tp1) = G_u(P(t),\lambdabar_u(t)), \quad                 
      P(t) = \beta \alphabar_{v1}(t)U_0+Z_u(t), \quad 
 Z_u(t)  \sim  {\mathcal N}(0,\beta\tau_w\alphabar_{v0}(t)). 
\eeq
This provides an exact description of the expected joint density of $(U_0,U(\tp1))$.
From this density we can then compute $\alphabar_{u0}(\tp1)$
and $\alphabar_{u1}(\tp1)$ in \eqref{eq:alphaSE}.

A similar calculation, again assuming the ``debiasing" and Gaussianity
assumptions are valid, shows that the limiting empirical distribution
of $\theta_v(\tp1)$ in \eqref{eq:thetauvLim} should follow
\beq \label{eq:VSE}
    V(\tp1) =  G_v( Q(t),\lambdabar_v(t)), \quad 
       Q(t) =  \alphabar_{u1}(\tp1)V_0+Z_v(t), \quad 
  Z_v(t) \sim  {\mathcal N}(0,\tau_w\alphabar_{u0}(\tp1) ). 
\eeq
This provides the joint density $(V_0,V(\tp1))$ from which
we can compute $\alphabar_{v0}(\tp1)$
and $\alphabar_{v1}(\tp1)$ in \eqref{eq:alphaSE}.
Thus, we have provided a simple method to recursively compute
the joint densities of the limits in \eqref{eq:thetauvLim} and their
second-order statistics \eqref{eq:alphaSE}.

\section{Asymptotic Analysis under Gaussian Noise}
\label{sec:asymAnal}

\subsection{Main Results}

In the above intuitive analysis, we did not formally describe the sense
of convergence nor offer any formal justification for the Gaussianity
assumptions.  In addition, we assumed that the parameters $\lambda_u(t)$
and $\lambda_v(t)$ were fixed sequences.  In reality, we will need them to
be data adaptive.  We will make the above arguments rigorous under the following
assumptions.  Note that
although we will be interested in MAP or MMSE estimation functions,
our analysis will apply to arbitrary factor selection functions $G_u(\cdot)$
and $G_v(\cdot)$.

\begin{assumption} \label{as:rankOne}
Consider a sequence of random realizations of the
estimation problem in Section~\ref{sec:intro}
indexed by the dimension $n$.
The matrix $\Abf$ and the parameters in Algorithm~\ref{algo:iterMax}
satisfy the following:
\begin{enumerate}[(a)]
\item For each $n$, the output dimension $m=m(n)$
is deterministic and scales linearly with $n$ as \eqref{eq:betaDef}.

\item
The matrix $\Abf$ has the form \eqref{eq:ArankOne}
where $\ubf_0 \in \R^m$ and $\vbf_0 \in \R^n$ represent
``true" left and right factors of a rank one term,
and  $\Wbf \in \R^{m \x n}$
is an i.i.d.\ Gaussian matrix with
components $W_{ij} \sim {\mathcal N}(0,\tau_w)$ for some $\tau_w > 0$.

\item  The factor selection functions $G_u(\pbf,\lambda_u)$
and $G_v(\qbf,\lambda_v)$
in lines \ref{line:Guopt} and \ref{line:Gvopt}
are componentwise separable in that for all component indices $i$ and $j$,
\beq\label{eq:GuvSep}
    G_u(\pbf,\lambda_u)_i = G_u(p_i,\lambda_u), \quad
    G_v(\qbf,\lambda_v)_j = G_v(q_j,\lambda_v),
\eeq
for some scalar functions $G_u(p,\lambda_u)$ and $G_v(q,\lambda_v)$.
The scalar functions
must be differentiable in $p$ and $q$.  Moreover, for every $t$,
the functions $G_u(p,\lambda_u)$ and
$\partial G_u(p,\lambda_u)/\partial p$
must be Lipschitz continuous in $p$ with a Lipschitz constant
that is continuous in $\lambda_u$, and continuous in $\lambda_u$
uniformly over $p$.  Similarly, for every $t$,
the functions $G_v(q,\lambda_v)$ and $\partial G_v(q,\lambda_v)/\partial q$
must be Lipschitz continuous in $q$ with a Lipschitz constant
that is continuous in $\lambda_v$, and continuous in $\lambda_v$
uniformly over $q$.

\item  The parameters $\lambda_u(t)$ and $\lambda_v(t)$
are computed via
\beq \label{eq:lamRankOne}
    \lambda_u(t) = \frac{1}{n} \sum_{j=1}^n
        \phi_{\lambda v}(t,v_{0j},v_j(t)), \quad 
    \lambda_v(t) = \frac{1}{m} \sum_{i=1}^m
        \phi_{\lambda u}(t,u_{0j},u_j(\tp1)) 
\eeq
for (possibly vector-valued) functions $\phi_{\lambda u}(\cdot)$
and $\phi_{\lambda v}(\cdot)$ that are pseudo-Lipschitz continuous
of order $p=2$.

\item
For $t=0$, the sets \eqref{eq:thetauvLim}
empirically converge with bounded moments of order $2$
to the limits
\beq \label{eq:thetauvLimInit}
    \lim_{n \arr \infty} \theta_u(0) \stackrel{d}{=} (U_0,U(0)), \quad
    \lim_{n \arr \infty} \theta_v(0) \stackrel{d}{=} (V_0,V(0)),
\eeq
for some random variable pairs $(U_0,U(0)$ and $(V_0,V(0))$.
See Appendix~\ref{sec:conv} for a precise definition of
the empirical convergence used here.
\end{enumerate}
\end{assumption}

The assumptions need some explanations.
Assumptions~\ref{as:rankOne}(a) and (b) simply state that we are
considering an asymptotic analysis for certain large matrices $\Abf$
consisting of a random rank one matrix plus Gaussian noise.
The analysis of Algorithm~\ref{algo:iterMax}
for higher ranks is still not known, but we provide some possible
ideas later.
Assumption~\ref{as:rankOne}(c) is a mild condition on the factor
selection functions.  In particular, the separability assumption
holds for the MAP or MMSE functions
\eqref{eq:GuvMAP} and \eqref{eq:GuvMMSE} under separable priors.

Assumption (d) allows for the parameters $\lambda_u(t)$ and $\lambda_v(t)$
in the factor selection functions to be data dependent, provided that
they can each be determined via empirical averages of some function
of the most recent data.  Assumption (e) is the initial induction hypothesis.

Under these assumptions, we can recursively
define the sequence of random variables $(U_0,U(t))$
and $(V_0,V(t))$ as described above.  For the parameters $\lambdabar_u(t)$
and $\lambdabar_v(t)$, define them from the expectations
\beq \label{eq:lambdaSE}
    \lambdabar_{u}(t) = \Exp\left[ \phi_{\lambda u}(V_0,V(t)) \right], \quad
    \lambdabar_{v}(t) = \Exp\left[ \phi_{\lambda v}(U_0,U(\tp1)) \right].
\eeq
These are the limiting values we would expect given the adaptation rules
\eqref{eq:lamRankOne}.

\begin{theorem} \label{thm:rankOneLim}  Under Assumption~\ref{as:rankOne},
the sets $\theta_u(t)$ and $\theta_v(t)$
in \eqref{eq:thetauvi} converge empirically with bounded moments of
order $p=2$ with the limits in \eqref{eq:thetauvLim}.
\end{theorem}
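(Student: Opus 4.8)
The plan is to recognize Algorithm~\ref{algo:iterMax} as an instance of the general iteration analyzed by Bayati and Montanari \cite{BayatiM:11} and to transfer their convergence theorem after a change of variables that isolates the Gaussian noise. Substituting the rank-one-plus-noise model \eqref{eq:ArankOne} into lines~\ref{line:pt} and~\ref{line:qt} gives
\[
  \pbf(t) = \frac{\vbf_0^T\vbf(t)}{m}\,\ubf_0
    + \frac{1}{\sqrt{m}}\,\Wbf\vbf(t) + \mu_u(t)\ubf(t),
\]
and a symmetric expression for $\qbf(t)$ in terms of $\Wbf^T$ and $\ubf(\tp1)$. First I would rescale so that $(1/\sqrt{m})\Wbf$ matches the variance normalization of \cite{BayatiM:11}, and then read off the three structural pieces that the Bayati--Montanari recursion requires: a Gaussian matrix acting on the current iterate, a rank-one ``signal'' term along $\ubf_0$ with coefficient $\vbf_0^T\vbf(t)/m$, and a memory term $\mu_u(t)\ubf(t)$. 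The nonlinearities are $\ubf(\tp1)=G_u(t,\pbf(t),\lambda_u(t))$ and $\vbf(\tp1)=G_v(t,\qbf(t),\lambda_v(t))$, which act componentwise by Assumption~\ref{as:rankOne}(c), with $\ubf_0$ and $\vbf_0$ carried along as fixed side-information. The essential point is that the damping factors \eqref{eq:muRankOne} are exactly the Onsager correction terms of the Bayati--Montanari iteration: they are the empirical averages of $\partial G_u/\partial p$ and $\partial G_v/\partial q$ scaled by $\tau_w/m$. Without this term the vector $(1/\sqrt{m})\Wbf\vbf(t)$ would not behave like fresh Gaussian noise, since $\vbf(t)$ is correlated with $\Wbf$ through earlier iterations; it is precisely the damping/Onsager subtraction that decorrelates it.

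With this identification I would invoke the main theorem of \cite{BayatiM:11}, which states that for pseudo-Lipschitz test functions of order $p$ the joint empirical distribution of the iterates together with the side-information vectors converges. The differentiability and Lipschitz hypotheses on $G_u$, $G_v$ (and on their derivatives) in Assumption~\ref{as:rankOne}(c) are exactly the regularity required of the nonlinearities there, and the bounded-moment convergence of the initial data in Assumption~\ref{as:rankOne}(f)---at the stronger order $2p-2$, which the recursion needs because each step can inflate moments---supplies the initialization. The theorem then yields that componentwise $\pbf(t)$ is asymptotically a scaled copy of $\ubf_0$ corrupted by independent Gaussian noise, where the scale $\beta\alphabar_{v1}(t)$ is the limit of $\vbf_0^T\vbf(t)/m$ and the variance $\beta\tau_w\alphabar_{v0}(t)$ is the limit of $(\tau_w/m)\|\vbf(t)\|^2$ (the factor $\beta$ coming from $n/m\to\beta$). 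Applying $G_u$ then produces the scalar equivalent model \eqref{eq:USE}, and symmetrically \eqref{eq:VSE}; the deterministic recursions \eqref{eq:alphaSE} follow by taking expectations in this model.

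The one genuinely new ingredient, and the step I expect to be the crux, is the data-dependence of $\lambda_u(t)$ and $\lambda_v(t)$. The Bayati--Montanari framework assumes the nonlinearities are fixed deterministic functions of the iteration index, whereas \eqref{eq:lamRankOne} computes these parameters as empirical averages of pseudo-Lipschitz functions $\phi_{\lambda u}$, $\phi_{\lambda v}$ of the current iterates. I would resolve this by an induction on $t$ run in parallel with the state evolution: assuming the empirical limits \eqref{eq:thetauvLim} hold through iteration $t$, the pseudo-Lipschitz convergence applied to $\phi_{\lambda u}$ and $\phi_{\lambda v}$ forces $\lambda_u(t)$ and $\lambda_v(t)$ to concentrate on the deterministic values $\lambdabar_u(t)$, $\lambdabar_v(t)$ of \eqref{eq:lambdaSE}. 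One then shows that substituting the deterministic limit $\lambdabar_u(t)$ for the random $\lambda_u(t)$ inside $G_u$ changes the iterate by a vanishing amount; the uniform continuity of $G_u$ and $\partial G_u/\partial p$ in $\lambda_u$ postulated in Assumption~\ref{as:rankOne}(c) is exactly what bounds this perturbation. With the parameters effectively frozen at their deterministic limits, each step becomes a single application of the fixed-nonlinearity Bayati--Montanari recursion, and the induction closes---simultaneously establishing \eqref{eq:thetauvLim} and the consistency of the SE constants in \eqref{eq:alphaSE}--\eqref{eq:lambdaSE} at step $\tp1$.
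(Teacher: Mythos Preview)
Your proposal is correct and follows essentially the paper's approach: a change of variables to isolate the Gaussian noise, identification of the damping \eqref{eq:muRankOne} as the Onsager correction, and an induction/continuity argument to freeze the data-dependent parameters at their deterministic SE limits. The paper organizes this last step as a standalone ``adaptive'' extension of the Bayati--Montanari recursion (Theorem~\ref{thm:SEGen}), proved by exactly the comparison-to-the-nonadaptive-iteration argument you sketch, and then specializes it.

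One point you should make explicit: the Bayati--Montanari recursion has no separate ``signal'' slot, so the coefficient $\alpha_{v1}(t)=\vbf_0^T\vbf(t)/n$ multiplying $\ubf_0$ in your decomposition of $\pbf(t)$ is itself a data-dependent scalar that must be absorbed into the componentwise nonlinearity and then frozen by the \emph{same} induction you outline for $\lambda_u(t)$. In the paper this is done by enlarging the adaptation vector to $\nu_u(t)=(\alpha_{v1}(t),\lambda_u(t))$ and writing $H_u(t,b,u_0,\nu_u)=G_u\bigl(t,\sqrt{\tau_w}\,b+\beta\alpha_{v1}u_0,\lambda_u\bigr)$; your continuity argument then needs the Lipschitz dependence of $G_u$ on its second argument, not only on $\lambda_u$, which Assumption~\ref{as:rankOne}(c) already supplies.
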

\begin{proof}  See Appendix~\ref{sec:rankOneLimPf}.
\end{proof}

\subsection{Scalar Equivalent Model}
The main contribution of Theorem~\ref{thm:rankOneLim}
is that it provides a simple \emph{scalar equivalent model} for
the asymptotic behavior of the algorithm.
The sets $\theta_u(t) = \{(u_{0i},u_i(t))\}$
and $\theta_v(t) = \{ (v_{0j},v_j(t)) \}$ in \eqref{eq:thetauvi}
are the components of true vectors $\ubf_0$ and $\vbf_0$
and their estimates $\ubf(t)$ and $\vbf(t)$.
The theorem shows that empirical distribution of these
components are asymptotically equivalent to simple random variable pairs
$(U_0,U(t))$ and $(V_0,V(t))$ given by \eqref{eq:USE} and \eqref{eq:VSE}.
In this scalar system, the variable $U(\tp1)$ is the output of the factor selection
function $G_u(\cdot)$ applied to a scaled and Gaussian noise-corrupted version
of the true variable $U_0$.  Similarly,
$V(\tp1)$ is the output of the factor selection
function $G_v(\cdot)$ applied to a scaled and Gaussian noise-corrupted version
of the true variable $V_0$.
Following \citep{GuoBS:09-Allerton}, we can thus call the result
a \emph{single-letter} characterization of the algorithm.

From this single-letter characterization, one can exactly compute a
large class of performance metrics of the algorithm.  Specifically,
the empirical convergence of $\theta_u(t)$
shows that for any pseudo-Lipschitz function $\phi(u_0,u)$
of order $p$, the following limit exists almost surely:
\beq \label{eq:phiuLim}
    \lim_{n \arr \infty} \frac{1}{m} \sum_{i=1}^m \phi(u_{0i},u_i(t)) =
        \Exp\left[ \phi(U_0,U(t)) \right],
\eeq
where the expectation on the right-hand side is over the variables
$(U_0,U(t))$ with $U_0$ identical to the variable in the limit
in \eqref{eq:thetauvLimInit} and $U(t)$ given by \eqref{eq:USE}.
This expectation can thus be explicitly evaluated by a simple
two-dimensional integral and consequently any component-separable
performance metric based on a suitably continuous
loss function $\phi(u_0,u)$ can be exactly computed.

For example, if we take $\phi(u_0,u) = (u-u_0)^2$
we can compute the asymptotic mean squared error of the estimate,
\beqa
    \lim_{n \arr \infty} \frac{1}{m} \|\ubf(t)-\ubf_0\|^2 =
    \lim_{n \arr \infty} \frac{1}{m} \sum_{i=1}^m (u_i(t)-u_{0i})^2 \nonumber =
    \Exp\left[ (U_0-U(t))^2 \right]. \label{eq:mseuLim}
\eeqa
Also, for each $t$, define the empirical second-order statistics
\beq \label{eq:alphauv}
    \alpha_{u0}(t) = \frac{1}{m}\|\ubf(t)\|^2,  ~
    \alpha_{u1}(t) = \frac{1}{m}\ubf(t)^T\ubf_0, ~
    \alpha_{v0}(t) = \frac{1}{n}\|\vbf(t)\|^2,  ~
    \alpha_{v1}(t) = \frac{1}{n}\vbf(t)^T\vbf_0.
\eeq
Since $\|\ubf(t)\|^2 = \sum_i u_i(t)^2$, it follows that
$\alpha_{u0}(t) \arr \Exp(U(t)^2)$ almost surely as $n \arr \infty$.
In this way, we obtain that the following limits hold almost surely:
\beq \label{eq:alphauvLim}
    \lim_{n \arr \infty} \alpha_{u0}(t) = \alphabar_{u0}(t), ~
    \lim_{n \arr \infty} \alpha_{u1}(t) = \alphabar_{u1}(t),  ~
    \lim_{n \arr \infty} \alpha_{v0}(t) = \alphabar_{v0}(t),  ~
    \lim_{n \arr \infty} \alpha_{v1}(t) = \alphabar_{v1}(t).  
\eeq
We will also use definitions
\beq \label{eq:tauuv}
    \tau_u := \Exp[U_0^2], \quad \tau_v := \Exp[V_0^2].
\eeq
From the second order statistics, we can compute
the asymptotic correlation coefficient between $\ubf_0$
and its estimate $\ubf$  given by
\beqa
    \rho_u(t) &:=& \lim_{n \arr \infty} \frac{|\ubf(t)^T\ubf_0|^2}
    {\|\ubf(t)\|^2\|\ubf_0\|^2}
    = \lim_{n \arr \infty} \frac{|(\ubf(t)^T\ubf_0)/m|^2}
    {(\|\ubf(t)\|^2/m)(\|\ubf_0\|^2/m)} \nonumber \\
    &=&  \frac{ \bigl[\Exp(U(t)U_0)\bigr]^2 }{
    \Exp U(t)^2 \Exp U_0^2} = \frac{\alphabar_{u1}^2(t)}{\alphabar_{u0}(t)\tau_u}.
     \hspace{0.5in} \label{eq:corruLim}
\eeqa
Similarly, the asymptotic correlation coefficient between $\vbf_0$ and
$\vbf$ has a simple expression
\beq \label{eq:corrvLim}
    \rho_v(t) := \lim_{n \arr \infty} \frac{|\vbf(t)^T\vbf_0|^2}
    {\|\vbf(t)\|^2\|\vbf_0\|^2}
     = \frac{\alphabar_{v1}^2(t)}{\alphabar_{v0}(t)\tau_v}.
\eeq
The correlation coefficient is useful, since we know that, without additional constraints,
the terms $\ubf_0$ and $\vbf_0$ can only be estimated up to a scalar.  The correlation
coefficient is scale invariant.

\section{Examples} \label{sec:examples}

The SE analysis can be used to exactly predict the asymptotic
behavior of the IterFac algorithm for any smooth scalar estimation
functions, including the MAP or MMSE functions \eqref{eq:GuvMAP} and \eqref{eq:GuvMMSE}.
There are, however, two cases, where the SE equations have particularly
simple and interesting solutions:  linear estimation functions and MMSE functions.

\subsection{Linear Selection Functions} \label{sec:linSel}

Suppose we use linear selection functions of the form
\beq \label{eq:Guvlin}
    G_u(\pbf,\lambda_u) =\lambda_u\pbf,\quad
    G_v(\qbf,\lambda_v)= \lambda_v\qbf,
\eeq
where the parameters $\lambda_u$ and $\lambda_v$
allow for normalization or other scalings of the outputs.
Linear selection functions of the form \eqref{eq:Guvlin}
arise when one selects
$G_u(\cdot)$ and $G_v(\cdot)$ from the MAP or MMSE functions \eqref{eq:GuvMAP}
or \eqref{eq:GuvMMSE} with Gaussian priors.

With Gaussian priors, the correct solution to the MAP estimate \eqref{eq:uvMAP}
is for $(\ubfhat,\vbfhat)$ to be the (appropriately scaled)
left and right maximal singular vectors of $\Abf$. We will
thus call the estimates $(\ubf(t),\vbf(t))$
of Algorithm~\ref{algo:iterMax} and linear selection functions \eqref{eq:Guvlin}
the \emph{estimated maximal singular vectors}.

\medskip
\begin{theorem} \label{thm:linConv}  Consider the state evolution equation
\eqref{eq:alphaSE},
\eqref{eq:USE}, and \eqref{eq:VSE} with the
linear selection functions \eqref{eq:Guvlin}.  Then:
\begin{itemize}
\item[(a)] The asymptotic
correlation coefficients \eqref{eq:corruLim} and \eqref{eq:corrvLim}
satisfy the following recursive rules:
\beq  \label{eq:corrLinSE}
    \rho_u(\tp1) = \frac{\beta \tau_u\tau_v \rho_v(t)}{
        \beta \tau_u\tau_v \rho_v(t) + \tau_w}, \quad
    \rho_v(t) = \frac{\tau_u\tau_v \rho_u(t)}{\tau_u\tau_v \rho_u(t) + \tau_w}.
\eeq
\item[(b)]  For any positive initial condition, $\rho_v(0) > 0$,
the asymptotic correlation coefficients converge to the limits
\beq \label{eq:corrLinLimit}
    \lim_{t \arr \infty} \rho_u(t) = \rho_u^* :=
    \frac{\left[ \beta\tau_u^2\tau_v^2 - \tau_w^2 \right]_+}{
    \tau_u\tau_v(\beta\tau_u\tau_v + \tau_w)}, \quad
   \lim_{t \arr \infty} \rho_v(t) = \rho_v^* :=
    \frac{\left[\beta\tau_u^2\tau_v^2 - \tau_w^2\right]_+}{
    \beta \tau_u\tau_v(\tau_u\tau_v + \tau_w)}, 
\eeq
where $[x]_+= \max\{0,x\}$.
\end{itemize}
\end{theorem}
\begin{proof}  See Appendix~\ref{sec:linConvPf}.
\end{proof}

The theorem provides a set of recursive equations for
the asymptotic correlation coefficients $\rho_u(t)$ and $\rho_v(t)$
along with simple expressions for the limiting values as $t \arr \infty$.
We thus obtain exactly how correlated the
estimated maximal singular vectors of a matrix $\Abf$ of the form
\eqref{eq:ArankOne} are to the rank one factors $(\ubf_0,\vbf_0)$.
The proof of the theorem also provides expressions for the second-order
statistics in \eqref{eq:alphaSE} to be used in the scalar equivalent model.

The fixed point expressions \eqref{eq:corrLinLimit}
agree with the more general results in \citep{KesMonOh:10}
that derive the correlations for ranks greater than one and low-rank
recovery with missing entries.
Similar results can also be found in \citep{CapDonFer:09}.
An interesting consequence of the expressions in \eqref{eq:corrLinLimit}
is that unless
\beq \label{eq:snrMinLin}
    \sqrt{\beta}\tau_u\tau_v > \tau_w,
\eeq
the asymptotic correlation coefficients are exactly zero.
The ratio $\tau_u\tau_v/\tau_w$ can be interpreted as a scaled SNR.

\subsection{Minimum Mean-Squared Error Estimation} \label{sec:mmse}

Next suppose we use the MMSE selection functions \eqref{eq:GuvMMSE}.
Using the scalar equivalent models \eqref{eq:USE} and \eqref{eq:VSE},
we take the scaling factor and noise parameters as
\begin{align}  \label{eq:mmseParam}
\begin{split}
    &\gamma_u(t) = \beta \alphabar_{v1}(t), \quad \nu_u(t)=\beta \tau_w \alphabar_{v0}(t) \\
    &\gamma_v(t) = \alphabar_{u1}(\tp1), \quad \nu_v(t)= \tau_w \alphabar_{u0}(\tp1).
\end{split}
\end{align}
Observe that these parameters can be computed from the SE equations and hence can determined
offline and are thus not data dependent.
We can use the initial condition $v_j(0) = \Exp[V_0]$ for all $j$,
so that the initial
variable in \eqref{eq:thetauvLimInit} is $V(0) = \Exp[V_0]$.
To analyze the algorithms define
\beq \label{eq:mmseuv}
    {\mathcal E}_u(\eta_u) = \var(U_0 \MID Y=\sqrt{\eta_u}U_0 + D),
     \quad
    {\mathcal E}_v(\eta_v) = \var(V_0 \MID Y=\sqrt{\eta_v}V_0 + D),
\eeq
where $D \sim \mathcal{N}(0,1)$ is independent of $U_0$ and $V_0$.
That is, ${\mathcal E}_u(\eta_u)$ and ${\mathcal E}_v(\eta_v)$
are the mean-squared errors of estimating $U_0$ and $V_0$ from
observations $Y$ with SNRs of $\eta_u$ and $\eta_v$.
The functions ${\mathcal E}_u(\cdot)$ and ${\mathcal E}_v(\cdot)$
arise in a range of estimation problems and the analytic and functional
properties of these functions can be found in \citep{GuoWuSV:11,WuVerdu:12}.

\begin{theorem} \label{thm:SEmmse}
Consider the solutions to the SE equations \eqref{eq:USE}, \eqref{eq:VSE},
and \eqref{eq:alphaSE}
under the MMSE selection functions \eqref{eq:GuvMMSE} with parameters \eqref{eq:mmseParam}
and initial condition
$V(0) = \Exp[V_0]$.  Then:
\begin{itemize}
\item[(a)] For all $t$, the asymptotic correlation coefficients
\eqref{eq:corruLim} and \eqref{eq:corrvLim}  satisfy the recursive
relationships
\beq \label{eq:corrSEmmse}
    \rho_u(\tp1) = 1 - \frac{1}{\tau_u}{\mathcal E}_u( \beta\tau_v\rho_v(t)/\tau_w), \quad 
    \rho_v(t) = 1 - \frac{1}{\tau_v}{\mathcal E}_v( \tau_u\rho_u(t)/\tau_w),
\eeq
with the initial condition $\rho_v(0)=(\Exp V_0)^2/\tau_v$.

\item[(b)]  If, in addition, $\Ecal_u(\eta_u)$
and $\Ecal_v(\eta_v)$ are continuous, then
for any positive initial condition, $\rho_v(0)>0$,
as $t \arr \infty$, the asymptotic correlation coefficients
$(\rho_u(t),\rho_v(t))$ increase monotonically to
fixed points $(\rho_u^*, \rho_u^*)$ of \eqref{eq:corrSEmmse}
with $\rho_v^* > 0$.
\end{itemize}
\end{theorem}
\begin{proof} See Appendix \ref{sec:SEmmsePf}. \end{proof}

Again, we see that we can obtain simple, explicit recursions
for the asymptotic correlations.  Moreover, the asymptotic correlations
provably converge to fixed points of the SE equations.  The
proof of the theorem also provides expressions for the second-order
statistics in \eqref{eq:alphaSE} to be used in the scalar equivalent model.

\subsection{Zero Initial Conditions} \label{sec:zero}
The limiting condition in part (b) of Theorem \ref{thm:SEmmse} requires that
$\rho_v(0) > 0$, which occurs when $\Exp[V_0] \neq 0$.
Suppose, on the other hand, that  $\Exp[U_0] = \Exp[V_0] = 0$.
Then, the initial condition will
be $V(0)=\Exp[V_0] = 0$. Under this initial condition,  a simple set of calculations
show that the SE equations \eqref{eq:corrSEmmse} will generate a sequence
 with $\rho_v(t) = \rho_u(t)=0$
for all $t$.  Thus, the IterFac algorithm will produce no useful estimates.

Of course, with zero mean random variables, a more sensible initial condition is
to take $\vbf(0)$ to be some non-zero random vector, as is commonly done
in power algorithm recursions for computing maximal singular vectors.
To understand the
behavior of the algorithm under this random initial condition, let
\beq \label{eq:rhovn}
    \rho_v(t,n) := \frac{|\vbf(t)^T\vbf_0|^2}{\|\vbf_0\|^2 \|\vbf(t)\|^2},
\eeq
where we have explicitly denoted the dependence on the problem dimension $n$.
From \eqref{eq:corrvLim}, we have that $\lim_{n \arr\infty} \rho_v(t,n) = \rho_v(t)$
for all $t$.  Also, with a random initial condition $\vbf(0)$ independent of
$\vbf_0$, it can be checked that $\rho_v(0,n) = O(1/n)$ so that
\[
    \rho_v(0) = \lim_{n \arr\infty} \rho_v(0,n) = 0.
\]
Hence, from the SE equations \eqref{eq:corrSEmmse}, $\rho_v(t) = \rho_u(t)=0$
for all $t$.  That is,
\beq \label{eq:corrLimZero}
    \lim_{t \arr \infty} \lim_{n \arr\infty} \rho(t,n) = 0.
\eeq
This limit suggests that, even with random initial condition, the IterFac algorithm
will not produce a useful estimate.

However, it is still possible that the limit in the opposite order
of \eqref{eq:corrLimZero} may be non-zero:
\beq \label{eq:rhontLim}
    \lim_{n \arr \infty} \lim_{t \arr\infty} \rho(t,n) > 0.
\eeq
That is, for each $n$, it may be possible to obtain a
non-zero correlation, but the number of iterations for
convergence increases with $n$ since the algorithm starts from a decreasingly small
initial correlation.  Unfortunately, our SE analysis cannot make predictions
on limits in the order of \eqref{eq:rhontLim}.

We can however analyze the following limit:

\begin{lemma} \label{lem:mmseZero}
Consider the MMSE SE equations \eqref{eq:corrSEmmse} with random variables
$U_0$ and $V_0$ such that $\Exp[V_0] = \Exp[U_0] = 0$.
For each $\epsilon > 0$,
let $\rho_v^\epsilon(t)$ be the solution to the SE equations with
an initial condition $\rho_v(0)= \epsilon$.
Then,
\begin{itemize}
\item[(a)]  If $\sqrt{\beta} \tau_u \tau_v > \tau_w$,
\beq \label{eq:corrvLimEpsPos}
    \lim_{\epsilon \arr 0} \lim_{t \arr \infty} \rho^\epsilon_v(t) > 0.
\eeq
\item[(b)]  Conversely, if $\sqrt{\beta} \tau_u \tau_v < \tau_w$,
\beq \label{eq:corrvLimEpsNeg}
    \lim_{\epsilon \arr 0} \lim_{t \arr \infty} \rho^\epsilon_v(t) = 0.
\eeq
\end{itemize}
\end{lemma}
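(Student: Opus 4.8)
The plan is to collapse the coupled recursion \eqref{eq:corrSEmmse} into a single scalar map and then run a standard fixed-point stability analysis around the origin. Set
\[
    F_u(\rho_v) := 1 - \tfrac{1}{\tau_u}\Ecal_u(\beta\tau_v\rho_v/\tau_w), \qquad
    F_v(\rho_u) := 1 - \tfrac{1}{\tau_v}\Ecal_v(\tau_u\rho_u/\tau_w),
\]
so that \eqref{eq:corrSEmmse} gives $\rho_u(\tp1) = F_u(\rho_v(t))$ and $\rho_v(\tp1) = F_v(\rho_u(\tp1))$. Hence, starting from $\rho_v^\epsilon(0) = \epsilon$, the sequence $\rho_v^\epsilon(t)$ is exactly the orbit $G^t(\epsilon)$ of the composite map $G := F_v \circ F_u$, and the whole lemma reduces to the one-dimensional dynamics of $\rho_v^\epsilon(\tp1) = G(\rho_v^\epsilon(t))$ near its fixed point at $0$.

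I would first record the structural properties of $G$ on $[0,1]$. Since $0 \le \Ecal_u(\eta) \le \tau_u$ and $0 \le \Ecal_v(\eta) \le \tau_v$, both $F_u$ and $F_v$ map $[0,1]$ into $[0,1]$, so $G$ does too and every orbit stays bounded. Because $\Ecal_u$ and $\Ecal_v$ are continuous and nonincreasing in the SNR, $F_u$ and $F_v$ are continuous and nondecreasing, hence so is $G$; consequently each orbit $G^t(\epsilon)$ is monotone in $t$ and, being bounded, converges to a fixed point of $G$. The crucial input is the linearization at $0$: using the zero-mean hypothesis $\Exp[U_0]=\Exp[V_0]=0$ we get $\Ecal_u(0)=\var(U_0)=\tau_u$ and $\Ecal_v(0)=\tau_v$, so $G(0)=0$, and the low-SNR slope of the scalar MMSE (available from the regularity of $\Ecal_u,\Ecal_v$ in \cite{GuoWuSV:11,WuVerdu:12}, and matching the linear-MMSE expansion $\tau_u/(1+\eta\tau_u)=\tau_u-\tau_u^2\eta+o(\eta)$) gives $\Ecal_u'(0)=-\tau_u^2$ and $\Ecal_v'(0)=-\tau_v^2$. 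A chain-rule computation then yields $F_u'(0)=\beta\tau_u\tau_v/\tau_w$ and $F_v'(0)=\tau_u\tau_v/\tau_w$, so
\[
    G'(0) = F_v'(0)\,F_u'(0) = \frac{\beta\tau_u^2\tau_v^2}{\tau_w^2},
\]
whence $\sqrt{\beta}\tau_u\tau_v > \tau_w$ is equivalent to $G'(0)>1$ and $\sqrt{\beta}\tau_u\tau_v < \tau_w$ to $G'(0)<1$.

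With this the two cases are routine. For part (a), $G'(0)>1$ gives $\delta>0$ with $G(x)>x$ on $(0,\delta]$, so there is no positive fixed point in $(0,\delta]$; since $G(1)\le 1$, applying the intermediate value theorem to $G(x)-x$ produces a smallest positive fixed point $\rho^{**}\in(\delta,1]$, and on $(0,\rho^{**})$ one has $G(x)>x$ (continuity plus absence of interior fixed points). As $G$ is nondecreasing with $G(\rho^{**})=\rho^{**}$, every orbit from $\epsilon\in(0,\rho^{**})$ stays in $(0,\rho^{**}]$, increases, and converges to the unique fixed point there, namely $\rho^{**}$; thus the inner limit equals $\rho^{**}$ for all small $\epsilon$ and $\lim_{\epsilon\arr 0}\lim_{t\arr\infty}\rho_v^\epsilon(t)=\rho^{**}>0$. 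For part (b), $G'(0)<1$ lets me pick $r\in(G'(0),1)$ and $\delta>0$ with $G(x)\le rx$ on $(0,\delta]$; for $\epsilon\le\delta$ the orbit stays in $(0,\delta]$ and obeys $\rho_v^\epsilon(t)\le r^t\epsilon\arr 0$, so the inner limit is already $0$ and hence so is the double limit.

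The one genuinely nontrivial ingredient is the low-SNR derivative $\Ecal_u'(0)=-\tau_u^2$ (and its counterpart), i.e.\ that the MMSE curve is differentiable at zero SNR with slope governed by the prior variance rather than by higher cumulants; I expect justifying this cleanly for a general prior under only the assumed moment conditions to be the main obstacle, and this is precisely where the analytic properties of $\Ecal_u(\cdot)$ and $\Ecal_v(\cdot)$ from the cited MMSE literature are invoked. Everything else is elementary monotone-iteration analysis, with the only mild subtlety being that the lemma asks for the iterated limit $\lim_{\epsilon\arr0}\lim_{t\arr\infty}$, handled above by noting that the inner limit is constant ($=\rho^{**}$) for all small $\epsilon$ in part (a) and identically $0$ in part (b).
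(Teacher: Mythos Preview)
Your proposal is correct and follows essentially the same route as the paper: collapse the recursion to a single monotone map $G=F_v\circ F_u$, compute $G'(0)=\beta\tau_u^2\tau_v^2/\tau_w^2$ via the low-SNR expansion $\Ecal_u(\eta)=\tau_u/(1+\eta\tau_u)+O(\eta^2)$ (the paper cites \cite{GuoV:05} for this), and then argue stability/instability of the fixed point at $0$. Your treatment is in fact more careful than the paper's in handling the double limit and in explicitly exhibiting the smallest positive fixed point in part~(a).
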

\begin{proof} See Appendix \ref{sec:mmseZeroPf}. \end{proof}

The result of the lemma is somewhat disappointing.
The lemma shows that $\sqrt{\beta} \tau_u \tau_v > \tau_w$ is essentially
necessary and sufficient for the IterFac algorithm with MMSE estimates
to be able to overcome arbitrarily small initial conditions and obtain
an estimate with a non-zero correlation to the true vector.
Unfortunately, this is the identical to the condition
\eqref{eq:snrMinLin} for the linear estimator to obtain a non-zero correlation.
Thus, the IterFac algorithm with MMSE estimates performs no better
than simple linear estimation in the initial iterations when the priors have zero means.
Since linear estimation
is equivalent to finding maximal singular vectors without any particular constraints,
we could interpret Lemma \ref{lem:mmseZero}
as saying that the IterFac algorithm under MMSE estimation cannot exploit
structure in the components in the initial iterations.
As a result, in low SNRs it may be necessary to use other algorithms
as an initial condition for IterFac -- such procedures, however, require further study.
\section{Numerical Simulation} \label{sec:sim}
To validate the SE analysis, we consider a simple case where
the left factor $\ubf_0 \in \R^m$ is i.i.d.\ Gaussian, zero mean and
$\vbf_0 \in \R^n$ has Bernoulli-Exponential components:
\beq \label{eq:vBernExp}
    v_{0j} \sim \left\{ \begin{array}{ll}
        0 & \mbox{with prob } 1-\lambda, \\
        \mbox{Exp}(1) & \mbox{with prob } \lambda,
        \end{array} \right.
\eeq
which provides a simple model for a sparse, positive vector.  The parameter
$\lambda$ is the fraction of nonzero components and is set in
this simulation to $\lambda=0.1$.
Note that these components have a non-zero mean so the difficulties of
Section \ref{sec:zero} are avoided.
The dimensions are $(m,n)=(1000,500)$, and the noise level $\tau_w$ is set
according to the scaled SNR defined as
\beq \label{eq:snrDef}
    \SNR = 10\log_{10}(\tau_u\tau_v/\tau_w).
\eeq
Estimating the vector $\vbf_0$ in this set-up
is related to finding sparse principal vectors of the matrix $\Abf^T\Abf$
for which there are large number of excellent methods including
\citep{CadJoll:95,JollTreUdd:03,ZhaZhaS:02,ZouHasTib:06,AspEGJL:07,ShenaHuang:08}
to name a few.  These algorithms include methods based on thresholding,
$\ell_1$-regularization and semidefinite programming.  A comparison
of the IterFac against these methods would be an interesting
avenue of future research.  Here, we simply wish to verify the SE predictions
of the IterFac method.

\begin{figure}
\centering
\includegraphics[width=3.5in]{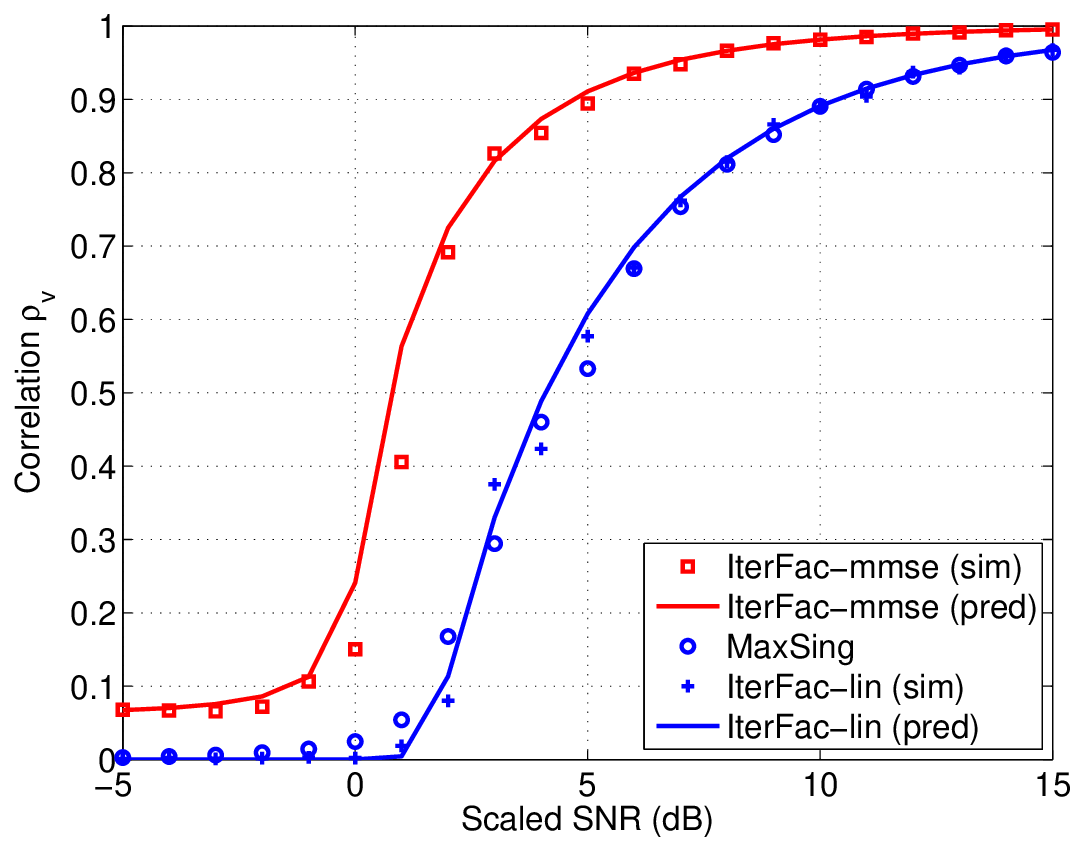}

\caption{Simulation of the IterFac algorithm for both the linear selection
functions in Section~\ref{sec:linSel} (labeled iter-lin) and MMSE selection
functions in Section \ref{sec:mmse} (labeled iter-mmse).
Plotted are the correlation values after 10 iterations.
The simulated values are compared against the SE predictions.
Also shown is the simple estimate from the maximal singular vectors of $\Abf$.
\label{fig:rankOneSim} }
\end{figure}

\begin{figure}
\begin{center}
\includegraphics[width=3.5in]{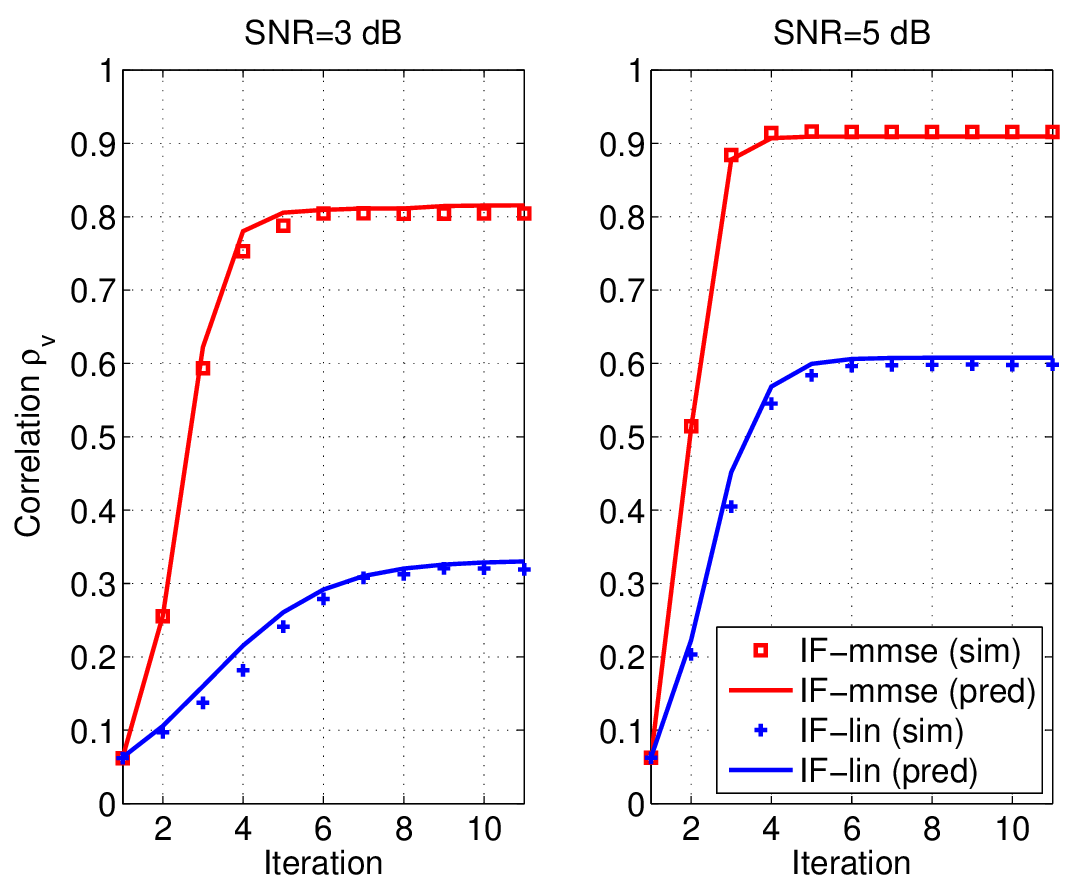}
\end{center}

\caption{Per iteration performance of the
IterFac algorithm for both the linear selection
functions in Section~\ref{sec:linSel} (IF-lin) and MMSE selection functions in Section
\ref{sec:mmse} (IF-mmse).  The simulated values are compared against the SE predictions.
\label{fig:rankOneIter} }
\end{figure}

The results of the simulation are shown in Fig.~\ref{fig:rankOneSim},
which shows the simulated and SE-predicted performance of the IterFac
algorithm with both the linear and MMSE selection functions for the priors
on $\ubf$ and $\vbf$.
The algorithm is run for $t=10$ iterations and the plot shows the median of the
final correlation coefficient $\rho_v(t)$
over 50 Monte Carlo trials at each value of SNR\@.
It can be seen that the performance of the IterFac algorithm
for both the linear and MMSE estimates are in excellent agreement with the
SE predictions.
The correlation coefficient of the linear estimator also matches
the correlation
of the estimates produced from the maximal singular vectors of $\Abf$.
This is not surprising since, with linear selection functions, the IterFac
algorithm is essentially an iterative method to find the maximal singular vectors.
The figure also shows the benefit of exploiting the prior on $\vbf_0$,
which is evident from the superior performance of the MMSE estimate over the linear
reconstruction.

Fig.\ \ref{fig:rankOneIter} shows the correlation coefficient as a function
of the iteration number for the MMSE and linear methods for two values of the SNR.
Again, we see that the SE predictions are closely matched to the median
simulated values.  In addition, we see that we get good convergence within 4 to 8 iterations. Based on the SE predictions, this number will not scale with dimension
and hence the simulation suggests that only a small number of iterations will be
necessary for even very large problems.
All code for the simulation can be found in the public GAMP sourceforge
project \citep{GAMP-code}.

\section{Limitations and Extensions} \label{sec:limits}

There are several potential lines for future work --
some of which have already been explored in other works made since the
original publication of this paper in \citep{RanganF:12-ISIT}.

\paragraph{Extensions to higher rank}
The algorithm presented in this paper considers only
rank one matrices.
The works
\citep{sakata2013sample,krzakala2013phase,kabashima2014phase,parker2014bilinear}
have proposed AMP-type algorithms for more general classes
of matrix factorization problems and provided analyses of these methods
based on replica methods and other ideas from statistical physics.

\paragraph{Unknown priors}
The MMSE estimator in Section \ref{sec:mmse} requires exact knowledge
of the priors on $U_0$ and $V_0$ as well as the Gaussian noise level $\tau_w$.
In many problems in statistics, these are not known.  There are two possible
solutions that may be investigated in the future.
One method is to parameterize the distributions of $U_0$ and $V_0$
and estimate these parameters in the MMSE selection functions
\eqref{eq:GuvMMSE}
-- potentially through an EM type procedure as in \citep{DempLR:77}.
This EM type approach with hidden hyperparameters has been recently
successfully used in a related approximate message passing method in
\citep{VillaSch:11}.  The analysis of the such parameter learning could
possibly be accounted for through the adaptation parameters
$\lambda_u(t)$ and $\lambda_v(t)$.
A second approach is to assume that the distributions of
$U_0$ and $V_0$ belong to a known family of distributions and then find
a min-max solution.  Such a min-max technique was proposed for AMP
recovery of sparse vectors in \citep{DonohoMM:09}.  See also
\citep{DonohoJ:94b}.

\paragraph{Optimality}
While the current paper characterizes the performance of the
IterFac algorithm, it remains open how far that performance is to optimal
estimation such as the joint MMSE estimates of $\ubf_0$ and $\vbf_0$.
AMP methods have been shown to be provably optimal in a wide range
of scenarios \citep{BouCaire:02,GuoW:06,GuoW:07,Rangan:10arXiv,Rangan:11-ISIT,BayatiM:11,javanmard2013state}.
More recently, \citep{deshpande2014information} has proven
that for
sparse binary priors, the IterFac algorithm was provably Bayes optimal
for certain sparsity levels.
A potential line of future work is to see if these results can be extended
to more general settings.

\section*{Conclusions}
We have presented a computationally-efficient method for estimating
rank-one matrices in noise.  The estimation problem is reduced to a sequence of
scalar AWGN estimation problems which can be performed easily for a large class
of priors or regularization functions on the coefficients.  In the case of
Gaussian noise, the asymptotic performance of the algorithm is exactly
characterized by a set of scalar state evolution equations which appear
to match the performance at moderate dimensions well.
Thus, the methodology is computationally simple, general and admits a precise
analysis in certain asymptotic, random settings.
Future work include extensions to higher rank matrices and
handling of the cases where the priors are not known.

\section*{Acknowledgments}
The authors thank Vivek Goyal, Ulugbek Kamilov,
Andrea Montanari and Phil Schniter for detailed comments on an earlier draft.

\appendix
\section{Appendices:  Proofs}

\subsection{Empirical Convergence of Random Variables} \label{sec:conv}
Bayati and Montanari's analysis in \citep{BayatiM:11}
employs certain deterministic
models on the vectors and then proves convergence properties
of related empirical distributions. To apply the same analysis here, we need
to review some of their definitions.  We say a function $\phi:\R^r \arr \R^s$
is \emph{pseudo-Lipschitz} of order $p>1$, if there exists an $L > 0$
such for any $\xbf$, $\ybf \in \R^r$,
\[
    \|\phi(\xbf) - \phi(\ybf)\|
        \leq L(1+\|\xbf\|^{p-1}+\|\ybf\|^{p-1})\|\xbf-\ybf\|.
\]

Now suppose that for each $n=1,2,\ldots$, we have a set of vectors
\[
    \theta(n) = \left\{ \vbf_i(n), i=1,\ldots,\ell(n)\right\},
\]
where the elements are vectors $\vbf_i(n) \in \R^s$, and the size
of the set is given by $\ell(n)$.
We say that the set of components of $\theta(n)$
\emph{empirically converges with bounded moments of order $p$} as $n\arr\infty$
to a random vector $\Vbf$ on $\R^s$ if:  For all
pseudo-Lipschitz continuous functions, $\phi$, of order $p$,
\beq \label{eq:phiConv}
    \lim_{n \arr\infty} \frac{1}{\ell(n)} \sum_{i=1}^{\ell(n)} \phi(\vbf_i(n))
    = \Exp[\phi(\Vbf)] < \infty.
\eeq
When the nature of convergence
is clear, we may write (with some abuse of notation)
\[
      \lim_{n \arr \infty} \theta(n) \stackrel{d}{=} \Vbf.
\]

\subsection{Bayati--Montanari Recursions with Adaptation}

Our main result will need an \emph{adaptive} version of the
recursion theorem of Bayati and Montanari
\citep{BayatiM:11}.
Let $H_u(t,d,u_0,\nu_u)$ and $H_v(t,b,v_0,\nu_v)$ be two functions
defined on arguments $t=0,1,2,\ldots$ and
$d$, $b$, $u_0$ and $v_0 \in \R$ as well as vectors $\nu_u$ and $\nu_v$.
Given a matrix $\Sbf \in \R^{m \x n}$ and vectors
$\ubf_0$ and $\vbf_0$, generate a sequence of
vectors $\bbf(t)$ and $\dbf(t)$ by the iterations
\begin{subequations} \label{eq:genAMPRec}
\beqa
    \bbf(t) &=& \Sbf \vbf(t) + \xi_u(t)\ubf(t),
        \label{eq:genAMPb} \\
    \dbf(t) &=& \Sbf^T\ubf(\tp1) + \xi_v(t)\vbf(t) \label{eq:genAMPd}
\eeqa
\end{subequations}
where
\begin{subequations} \label{eq:genAMPuv}
\beqa
    u_i(\tp1) &=& H_u(t,b_i(t),u_{0i},\nu_u(t)), \\
    v_j(\tp1) &=& H_v(t,d_j(t),v_{0j},\nu_v(t)),
\eeqa
\end{subequations}
and $\xi_v(t)$ and $\xi_u(t)$ are scalar step sizes given by
\begin{subequations} \label{eq:genAMPDeriv}
\beqa
    \xi_v(t) &=& -\frac{1}{m} \sum_{i=1}^m
        \frac{\partial}{\partial b_i}H_u(t,b_i(t),u_{0i},\nu_u(t)) \\
    \xi_u(\tp1) &=& -\frac{1}{m} \sum_{j=1}^n  \frac{\partial}{\partial d_j}
    H_v(t,d_j(t),v_{0j},\nu_v(t)). \hspace{0.7cm}
        \label{eq:genAMPmuu}
\eeqa
\end{subequations}
The recursions \eqref{eq:genAMPRec} to \eqref{eq:genAMPDeriv}
are identical to the recursions analyzed in \citep{BayatiM:11},
except for the introduction of the parameters $\nu_u(t)$
and $\nu_v(t)$.  We will call these parameters \emph{adaptation
parameters} since they enable the functions $H_u(\cdot)$ and $H_v(\cdot)$
to have some data dependence, not explicitly considered in
\citep{BayatiM:11}.  Similar to the selection of the parameters
$\lambda_u(t)$ and $\lambda_v(t)$ in \eqref{eq:lamRankOne},
we assume that, in each iteration $t$,
the adaptation parameters are selected by functions of the form,
\beq \label{eq:genNu}
    \nu_u(t) = \frac{1}{n} \sum_{j=1}^n \phi_u(t,v_{0j},v_j(t)),
    \quad        
    \nu_v(t) = \frac{1}{m} \sum_{i=1}^m \phi_v(t,u_{0i},u_i(\tp1))
\eeq
where $\phi_u(\cdot)$ and $\phi_v(\cdot)$ are
(possibly vector-valued) pseudo-Lipschitz continuous
of order $p$ for some $p > 1$.
Thus, the values of $\nu_u(t)$ and $\nu_v(t)$ depend on the
outputs $\vbf(t)$ and $\ubf(\tp1)$.
Note that in equations \eqref{eq:genAMPuv} to \eqref{eq:genNu},
$d_i$, $u_{0i}$, $b_j$ and $v_{0j}$ are the components
of the vectors $\dbf$, $\ubf_0$, $\bbf$ and $\vbf_0$, respectively.
The algorithm is initialized with $t=0$, $\xi_u(0)=0$ and
some vector $\vbf(0)$.

Now, similar to Section~\ref{sec:asymAnal}, consider
a sequence of random realizations of the
parameters indexed by the input dimension $n$.  For each $n$, we
assume that the output dimension $m=m(n)$
is deterministic and scales linearly as in \eqref{eq:betaDef}
for some $\beta \geq 0$.
Assume that the transform matrix $\Sbf$ has i.i.d.\ Gaussian
components $s_{ij} \sim {\cal N}(0,1/m)$.
Also assume that the empirical limits in
\eqref{eq:thetauvLimInit} hold with bounded moments of order $2p-2$
for some limiting random variables $(U_0,U(0))$ and $(V_0,V(0))$.
We will also assume the following continuity assumptions on $H_u(\cdot)$
and $H_v(\cdot)$:

\begin{assumption} \label{as:Huv} The function $H_u(t,b,u_0,\nu_u)$
satisfies the following continuity conditions:
\begin{itemize}
\item[(a)]  For every $\nu_u$ and $t$, $H_u(t,b,u_0,\nu_u)$
and its derivative $\partial H_u(t,b,u_0,\nu_u)/\partial b$
are Lipschitz continuous in $b$ and $u_0$ for some Lipschitz constant
that is continuous in $\nu_u$; and
\item[(b)]  For every $\nu_u$ and $t$, $H_u(t,b,u_0,\nu_u)$
and $\partial H_u(t,b,u_0,\nu_u)/\partial b$ are
is continuous at $\nu_u$ uniformly over $(b,u_0)$.
\end{itemize}
The function $H_v(t,d,v_0,\nu_v)$
satisfies the analogous continuity assumptions
in $d$, $v_0$ and $\nu_v$.
\end{assumption}

\medskip
Under these assumption,  we will show, as in Section~\ref{sec:asymAnal},
 that for any fixed iteration $t$, the
sets $\theta_u(t)$ and $\theta_v(t)$ in \eqref{eq:thetauvi}
converge empirically to the limits \eqref{eq:thetauvLim} for some
random variable pairs $(U_0,U(t))$ and $(V_0,V(t))$.
The random variable $U_0$ is identical to the
variable in the limit \eqref{eq:thetauvLimInit} and, for $t \geq 0$,
$U(t)$ is given by
\beq \label{eq:UtGen}
    U(\tp1) = H_u(t,B(t),U_0,\nubar_u(t)), \quad
    B(t) \sim {\mathcal N}(0,\tau^b(t)),
\eeq
for some deterministic constants $\nubar_v(t)$ and $\tau^b(t)$
that will be defined below.
Similarly, the random variable $V_0$ is identical to the
variable in the limit \eqref{eq:thetauvLimInit} and, for $t \geq 0$,
$V(t)$ is given by
\beq    \label{eq:VtGen}
    V(\tp1) = H_v(t,D(t),V_0,\nubar_v(t)), \quad
       D(t) \sim {\mathcal N}(0,\tau^d(t)),
\eeq
for some constants $\nubar_v(t)$ and $\tau^d(t)$, also defined below.

The constants $\tau^b(t)$, $\tau^d(t)$, $\nubar_u(t)$
and $\nubar_v(t)$ can be computed
recursively through the following state evolution equations
\begin{subequations} \label{eq:SEGen}
\beqa
    \tau^d(t) &=&
        \Exp\left[ U(\tp1)^2 \right], 
    \quad \tau^b(t) = \beta
        \Exp\left[ V(t)^2 \right] \label{eq:taub} \\
    \nubar_u(t) &=& \Exp\left[\phi_u(t,V_0,V(t))\right], 
    \quad
    \nubar_v(t) = \Exp\left[\phi_v(t,U_0,U(\tp1))\right] \label{eq:nuv}
\eeqa
\end{subequations}
where the expectations are over the random variables
$U(t)$ and $V(t)$ above and initialized with
\beq \label{eq:SEGenInit}
    \tau^b(0) := \beta \Exp\left[ V(0)^2 \right].
\eeq
With these definitions, we can now state the adaptive version of the
result from Bayati and Montanari \citep{BayatiM:11}.
Although the full proof requires that $p=2$,
much of the proof is valid for $p\geq 2$.  Hence, where possible,
we provide the steps for the general $p$ case.
\medskip

\begin{theorem} \label{thm:SEGen}
Consider the recursion in \eqref{eq:genAMPRec} to \eqref{eq:genNu}
satisfying the above assumptions for the case when $p=2$.
Then, for any fixed
iteration number $t$, the sets
$\theta_u(t)$ and $\theta_v(t)$ in \eqref{eq:thetauvi}
converge empirically to the limits \eqref{eq:thetauvLim}
with bounded moments of order $p=2$ to the
random variable pairs $(U_0,U(t))$ and $(V_0,V(t))$
described above.
\end{theorem}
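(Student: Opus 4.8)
The plan is to reduce Theorem~\ref{thm:SEGen} to the non-adaptive recursion theorem of Bayati and Montanari \cite{BayatiM:11} by a coupling argument, carried out by induction on the iteration number $t$. The driving observation is that the adaptation parameters $\nu_u(t)$ and $\nu_v(t)$, being empirical averages \eqref{eq:genNu} of pseudo-Lipschitz functions of the iterates, should concentrate on the deterministic limits $\nubar_u(t)$ and $\nubar_v(t)$ defined in \eqref{eq:nuu}--\eqref{eq:nuv}. Once the parameters are asymptotically deterministic constants, the maps $H_u(t,\cdot,\cdot,\nubar_u(t))$ and $H_v(t,\cdot,\cdot,\nubar_v(t))$ are fixed Lipschitz functions by Assumption~\ref{as:Huv}(a), so the recursion falls within the exact scope of \cite{BayatiM:11}. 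This is the ``slight modification to account for parameter adaptation'' anticipated in the introduction.

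Concretely, alongside the true recursion \eqref{eq:genAMPRec}--\eqref{eq:genNu} I would run an \emph{oracle} recursion driven by the same matrix $\Sbf$, the same factors $\ubf_0,\vbf_0$, and the same initial vector $\vbf(0)$, but with the adaptation parameters frozen at their deterministic limits $\nubar_u(t),\nubar_v(t)$; denote its iterates $\tilde\ubf(t),\tilde\vbf(t),\tilde\bbf(t),\tilde\dbf(t)$. Because the frozen functions are Lipschitz and satisfy every hypothesis of \cite{BayatiM:11}, the original theorem applies verbatim to the oracle recursion and yields exactly the empirical limits \eqref{eq:UtGen}--\eqref{eq:VtGen} with the state evolution \eqref{eq:SEGen}. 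It then remains to show that the true and oracle iterates coincide in the limit.

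This coincidence I would establish by a second, nested induction. Assuming $\tfrac1m\|\ubf(t)-\tilde\ubf(t)\|^2\to 0$ and $\tfrac1n\|\vbf(t)-\tilde\vbf(t)\|^2\to 0$ almost surely, with the order-$2p-2$ moments of both sequences bounded, the empirical convergence of $\theta_v(t)$ together with the pseudo-Lipschitz property of $\phi_u$ gives $\nu_u(t)\to\nubar_u(t)$, and likewise $\nu_v(t)\to\nubar_v(t)$. Propagating the discrepancy one step, the matrix term $\Sbf(\vbf(t)-\tilde\vbf(t))$ is controlled in normalized $\ell^2$ by the operator norm of $\Sbf$, which concentrates for a Gaussian $\Sbf$ with entries of variance $1/m$; the remaining terms are handled by the Lipschitz continuity of $H_u$ in its $b$-argument and its continuity in $\nu_u$ uniformly over $(b,u_0)$ from Assumption~\ref{as:Huv}, along with convergence of the step sizes $\xi_u(t),\xi_v(t)$. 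This yields $\tfrac1m\|\ubf(\tp1)-\tilde\ubf(\tp1)\|^2\to 0$ and its $\vbf$-analogue, closing the induction. A standard lemma---that normalized $\ell^2$ closeness plus bounded moments forces agreement of empirical averages of any pseudo-Lipschitz test function of order $p$---then transfers the oracle limits to the true iterates, establishing \eqref{eq:thetauvLim}.

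The main obstacle I anticipate is the circular dependence between the two facts being proved: the convergence $\nu_u(t)\to\nubar_u(t)$ presupposes empirical convergence of the iterates at step $t$, while that convergence is what the coupling is meant to supply. Threading both through a single induction on $t$---so that parameter convergence at step $t$ is available exactly when it is needed to push the iterate discrepancy to step $\tp1$---is the delicate bookkeeping. Secondary care is required so that the error neither amplifies through the nonlinearities nor through multiplication by $\Sbf$; this is where operator-norm control of $\Sbf$ and the \emph{uniform}-in-$(b,u_0)$ continuity in $\nu$ of Assumption~\ref{as:Huv}(b) are essential, the latter precisely because the argument $b_i(t)$ ranges over an unbounded set as $n\to\infty$.
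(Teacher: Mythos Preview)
Your proposal is correct and follows essentially the same route as the paper: run an oracle (the paper calls it the ``non-adaptive'' recursion, with asterisk superscripts) in which $\nu_u(t),\nu_v(t)$ are frozen at $\nubar_u(t),\nubar_v(t)$, invoke \cite{BayatiM:11} directly on it, and then show by induction on $t$ that the true and oracle iterates agree asymptotically in normalized norm, using the operator-norm bound on $\Sbf$, the Lipschitz/uniform continuity of $H_u,H_v$ in Assumption~\ref{as:Huv}, and the pseudo-Lipschitz property of $\phi_u,\phi_v$ to close the loop on $\nu_u(t)\to\nubar_u(t)$. The only cosmetic difference is that the paper tracks the discrepancy in the $\ell^p$ norm rather than $\ell^2$ (and uses H{\"o}lder with exponents $(p-1)/p$ and $1/p$ to control the pseudo-Lipschitz increments), which is slightly more direct since the empirical convergence is stated with order-$p$ moments.
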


\begin{proof}  We use an asterisk superscript
to denote the outputs of the \emph{non-adaptive}
version of the recursions \eqref{eq:genAMPRec} to \eqref{eq:genNu}.
That is, quantities such as
$\ubf^*(t), \vbf^*(t), \bbf^*(t), \dbf^*(t), \ldots$, will represent
the outputs generated by
recursions \eqref{eq:genAMPRec} to \eqref{eq:genNu} with the
same initial conditions ($\vbf^*(0)=\vbf(0)$ and $\xi_u(0)=\xi_u^*(0)=0$),
but in \eqref{eq:genAMPuv} and \eqref{eq:genAMPDeriv},
$\nu_u(t)$ and $\nu_v(t)$ are replaced by their deterministic
limits $\nubar_u(t)$ and $\nubar_v(t)$.
Therefore, the
\begin{subequations} \label{eq:genAMPuvdet}
\beqa
    u^*_i(\tp1) &=& H_u(t,b^*_i(t),u^*_{0i},\nubar_u(t)), \\
    v^*_j(\tp1) &=& H_v(t,d^*_j(t),v^*_{0j},\nubar_v(\tp1)),
\eeqa
\end{subequations}
and
\begin{subequations} \label{eq:genAMPDerivdet}
\beqa
    \xi^*_v(t) &=& -\frac{1}{m} \sum_{i=1}^m
        \frac{\partial}{\partial b_i}H_u(t,b^*_i(t),u^*_{0i},\nubar_u(t)) \\
    \xi^*_u(\tp1) &=& -\frac{1}{m} \sum_{j=1}^n  \frac{\partial}{\partial d_j}
    H_v(t,d^*_j(t),v^*_{0j},\nubar_v(t)). \hspace{0.7cm}
\eeqa
\end{subequations}
Now, Bayati and Montanari's result in \citep{BayatiM:11} shows that
this non-adaptive algorithm satisfies the required properties.  That is,
the following limits hold with bounded moments of order $p$,
\begin{subequations} \label{eq:thetauvLimNA}
\beqa
    \lim_{n \arr \infty} \{(u_{0i},u^*_i(t)), i=1,\ldots,m\} &=& (U_0,U(t)) \\
    \lim_{n \arr \infty} \{(v_{0j},v^*_j(t)), j=1,\ldots,n\} &=& (V_0,V(t)).
\eeqa
\end{subequations}
So, the limits \eqref{eq:thetauvLim} will be shown if we can prove
the following limits hold almost surely for all $t$:
\beq \label{eq:uvlimd}
    \lim_{n \arr \infty} \frac{1}{m} \|\ubf(t)-\ubf^*(t)\|_p^p = 0,
    \quad 
    \lim_{n \arr \infty} \frac{1}{n} \|\vbf(t)-\vbf^*(t)\|_p^p = 0,
\eeq
where $\|\cdot\|_p$ is the $p$-norm.
In the course of proving \eqref{eq:uvlimd}, we will also
show the following limits hold almost surely,
\begin{subequations} \label{eq:otherlimd}
\beqa
    \lim_{n \arr \infty} \frac{1}{m} \|\bbf(t)-\bbf^*(t)\|_p^p &=& 0
        \label{eq:blimd} \\
    \lim_{n \arr \infty} \frac{1}{n} \|\dbf(t)-\dbf^*(t)\|_p^p &=& 0
        \label{eq:dlimd} \\
    \lim_{n \arr \infty} |\xi_u(t)-\xi^*_u(t)| &=& 0 \label{eq:xiulimd} \\
    \lim_{n \arr \infty} |\xi_v(t)-\xi^*_v(t)| &=& 0 \label{eq:xivlimd} \\
    \lim_{n \arr \infty} \nu_u(t) &=& \nubar_u(t) \label{eq:nuulimd} \\
    \lim_{n \arr \infty} \nu_v(t) &=& \nubar_v(t) \label{eq:nuvlimd}
\eeqa
\end{subequations}
The proof of the limits \eqref{eq:uvlimd} and \eqref{eq:otherlimd}
can be demonstrated via induction on $t$ with the following
straightforward (but somewhat tedious) continuity argument:

To begin the induction argument, first note that
the non-adaptive algorithm has the same initial condition as
the adaptive algorithm.  That is,
$\vbf^*(0)=\vbf(0)$ and $\xi_u(0)=\xi_u^*(0)=0$.  Also, since
$\xi_u(0)=\xi_u^*(0)=0$, from \eqref{eq:genAMPb},
the initial value of $\ubf(t)$ does
not matter.  So, without loss of generality, we can assume that the
initial condition satisfies $\ubf(0)=\ubf^*(0)$.  Thus, the limits
\eqref{eq:uvlimd} and
\eqref{eq:xiulimd} hold for $t=0$.

We now proceed by induction.
Suppose that the limits \eqref{eq:uvlimd} and
\eqref{eq:xiulimd} hold almost surely for some $t \geq 0$.  Since $\Sbf$
has i.i.d.\ components with zero mean and variance $1/m$,
by the Marceko-Pastur Theorem \citep{MarcenkoP:67},
the maximum singular value of $\Sbf$ is bounded.  For $p=2$,
the maximum singular value is the $p$-norm operator norm,
and therefore, there exists a
constant $C_S > 0$ such that
\beq \label{eq:Snorm}
    \limsup_{n \arr \infty} \|\Sbf\|_p \leq C_S, \quad
    \limsup_{n \arr \infty} \|\Sbf^T\|_p \leq C_S.
\eeq
Substituting the bound \eqref{eq:Snorm} into \eqref{eq:genAMPb},
we obtain
\beqa
    \lefteqn{ \|\bbf(t)-\bbf^*(t)\|_p \leq \|\Sbf\|_p\|\vbf(t)-\vbf^*(t)\|_p
         + |\xi_u(t)|\|\ubf(t)-\ubf^*(t)\|_p +
         |\xi_u(t)-\xi^*_u(t)|\|\ubf^*(t)\|_p } \nonumber \\
         & \leq& C_S\|\vbf(t)-\vbf^*(t)\|_p + |\xi_u(t)|\|\ubf(t)-\ubf^*(t)\|_p
         + |\xi_u(t)-\xi^*_u(t)|\|\ubf^*(t)\|_p.
         \hspace{2cm} \label{eq:blimbnd1}
\eeqa
Now, since $p \geq 1$, we have that for any positive numbers
$a$ and $b$,
\beq \label{eq:abpIneq}
    (a+b)^p \leq 2^{p-1}(a^p + b^p).
\eeq
Applying \eqref{eq:abpIneq} into \eqref{eq:blimbnd1},
and the fact that $\lim_n m/n = \beta$, we obtain that
\beqa
    \lefteqn{
     \frac{1}{m}\|\bbf(t)-\bbf^*(t)\|^p_p \leq
         C'_S\Bigl[ \frac{1}{n}\|\vbf(t)-\vbf^*(t)\|^p_p
         } \nonumber \\
        & &  + \frac{|\xi_u(t)|^p}{m}\|\ubf(t)-\ubf^*(t)\|^p_p
      +\frac{|\xi_u(t)-\xi^*_u(t)|^p}{m}
            \|\ubf^*(t)\|^p_p\Bigr], \label{eq:blimbnd2}
\eeqa
for some other constant $C'_S > 0$.
Now, since $\ubf^*(t)$ is the output of the non-adaptive
algorithm it satisfies the limit
\beq \label{eq:ubndNA}
    \lim_{n \arr \infty} \frac{1}{m}\|\ubf^*(t)\|_p^p
    = \lim_{n \arr \infty} \frac{1}{m}\sum_{i=1}^m|u_i^*(t)|_p^p
    = \Exp|U(t)|^p < \infty.
\eeq
Substituting the bound \eqref{eq:ubndNA} along with
induction hypotheses, \eqref{eq:uvlimd} and
\eqref{eq:xiulimd} into \eqref{eq:blimbnd2}
shows \eqref{eq:blimd}.

Next, to prove the limit \eqref{eq:nuulimd},
first observe that since $\phi_u(\cdot)$ is pseudo-Lipschitz
continuous of order $p$, we have that
$\nubar_u(t)$ in \eqref{eq:nuv} can be replaced by
the limit of the empirical means
\beq \label{eq:nuuLimNA}
    \nubar_u(t) = \lim_{n \arr\infty} \frac{1}{n}
        \sum_{j=1}^n \phi_u(t,v_{0j},v_j^*(t)),
\eeq
where the limit holds almost surely.
Combining \eqref{eq:nuuLimNA} with the \eqref{eq:genNu},
\beqa
    \limsup_{n \arr \infty} |\nu_u(t)-\nubar_u(t)|
    \leq
    \limsup_{n \arr\infty} \frac{1}{n}
        \sum_{j=1}^n |\phi_u(t,v_{0j},v_j^*(t)) -\phi_u(t,v_{0j},v_j^*(t))|
        \label{eq:nuuDiff1}
\eeqa
Applying the fact that
$\phi_u(\cdot)$ is pseudo-Lipschitz continuous of order $p$
to \eqref{eq:nuuDiff1}, we obtain that
there exists a constant $L_v > 0$ such that
\beqa
    \lefteqn{ \limsup_{n \arr \infty} |\nu_u(t)-\nubar_u(t)|
    \leq \limsup_{n \arr \infty}
    \frac{L_v}{n}\sum_{j=1}^n\Bigl[ |v_j(t)|^{p-1}
    + |v_j^*(t)|^{p-1}\Bigr]
    |v_j(t)-v_j^*(t)| } \nonumber \\
    &\leq& \limsup_{n \arr \infty}
    L_v\Bigl[ \Bigl(\frac{1}{n}\|\vbf(t)\|_{p}^{p}\Bigr)^{(p-1)/p}
    + \Bigl(\frac{1}{n}\|\vbf^*(t)\|_{p}^{p}\Bigr)^{(p-1)/p}
    \Bigr]\Bigl(\frac{1}{n}\|\vbf(t)-\vbf^*(t)\|^p_p\Bigr)^{1/p},
      \label{eq:nuuDiff2}
\eeqa
where the last step is due to H{\"o}lder's inequality with the
exponents
\[
    \frac{p-1}{p} + \frac{1}{p} = 1.
\]
Now, similar to the proof of \eqref{eq:ubndNA} one can show that
the non-adaptive output satisfies the limit
\beq \label{eq:vbnd1}
    \lim_n \frac{1}{n} \|\vbf^*(t)\|_p^p = \Exp|V(t)|^p < \infty
\eeq
Also, from the induction hypothesis \eqref{eq:uvlimd},
it follows that the non-adaptive output must satisfy the same
limit
\beq \label{eq:vbnd2}
    \lim_n \frac{1}{n} \|\vbf(t)\|_p^p = \Exp|V(t)|^p < \infty.
\eeq
Applying the bounds \eqref{eq:vbnd1} and \eqref{eq:vbnd2}
and the limit \eqref{eq:nuuDiff2} shows that \eqref{eq:nuulimd} holds
almost surely.

Now the limit in \eqref{eq:nuulimd} and the second limit in
\eqref{eq:uvlimd}
together with the continuity conditions on $H_u(\cdot)$
in Assumption \ref{as:Huv}
show that the first limit in \eqref{eq:uvlimd} holds almost surely for $t+1$
and \eqref{eq:xivlimd} holds almost surely for $t$.
Using \eqref{eq:genAMPd}, the proof of the limit \eqref{eq:dlimd}
is similar to the proof of \eqref{eq:blimd}.
These limits in turn show that the second limit in \eqref{eq:uvlimd} and
the limits in \eqref{eq:xiulimd}
hold almost surely for $t+1$.  We have thus shown
that if \eqref{eq:uvlimd} and
\eqref{eq:xiulimd} hold almost surely for some $t$, they hold for $t+1$.
Thus, by induction they hold for all $t$.
Finally, applying the limits \eqref{eq:thetauvLimNA},
\eqref{eq:uvlimd}
and a continuity argument shows that the desired limits
\eqref{eq:thetauvLim} hold almost surely.

\end{proof}

\subsection{Proof of Theorem~\ref{thm:rankOneLim}} \label{sec:rankOneLimPf}

The theorem directly follows from the adaptive
Bayati--Montanari recursion theorem,
Theorem~\ref{thm:SEGen} above, with some change of variables.
Specifically, let
\beq \label{eq:Sdef}
    \Sbf = \frac{1}{\sqrt{m\tau_w}} \Wbf,
\eeq
where $\Wbf$ is the Gaussian noise in the rank one model in
Assumption~\ref{as:rankOne}(b).  Since $\Wbf$ has i.i.d.\ components
with Gaussian distributions ${\mathcal N}(0,\tau_w)$, the
components of $\Sbf$ will be i.i.d.\ with distributions
${\mathcal N}(0,1/m)$.

Now, using the rank one model for $\Abf$ in \eqref{eq:ArankOne}
\beq     \label{eq:Avpf}
    \Abf\vbf(t) =  \ubf_0\vbf_0^T\vbf(t) + \sqrt{m}\Wbf\vbf(t)
    = n\alpha_{v1}(t)\ubf_0 + \sqrt{m}\Wbf\vbf(t),
\eeq
where the last step is from the definition of $\alpha_{v1}(t)$ in \eqref{eq:alphauv}.
Substituting \eqref{eq:Avpf} into the the update rule for $\pbf(t)$
in line \ref{line:pt} of Algorithm \ref{algo:iterMax},
we obtain
\beqa
    \lefteqn{ \pbf(t) =
        (1/m)\Abf(t)\vbf(t) + \mu_u(t)\ubf(t) }  \nonumber \\
    &=& (1/\sqrt{m})\Wbf\vbf(t)
        + \beta \alpha_{v1}(t)\ubf_0 +  \mu_u(t)\ubf(t). \label{eq:Wvpf}
\eeqa
Note that we have used the fact that $\beta=n/m$.
Hence, if we define
\beq \label{eq:bdefpf}
    \bbf(t) = \frac{1}{\sqrt{\tau_w}}(\pbf(t)-\beta\alpha_{v1}(t)\ubf_0),
\eeq
then \eqref{eq:Sdef} and \eqref{eq:Wvpf} show that
\beq \label{eq:bSpf}
    \bbf(t) = \Sbf(t)\vbf(t) + \xi_u(t)\ubf(t), \quad
    \xi_u(t) = \mu_u(t)/\sqrt{\tau_w}.
\eeq

Similarly, one can show that if we define
\beq \label{eq:ddefpf}
    \dbf(t) = \frac{1}{\sqrt{\tau_w}}(\qbf(t)-\alpha_{u1}(t)\vbf_0),
\eeq
then
\beq \label{eq:dSpf}
    \dbf(t) = \Sbf(t)^T\ubf(\tp1) + \xi_v(t)\vbf(t), \quad
    \xi_v(t) = \mu_v(t)/\sqrt{\tau_w}.
\eeq

Next define the adaptation functions
\beq \label{eq:phiuvpf}
    \phi_u(t,v_0,v) := (vv_0, \phi_{\lambda u}(t,v_0,v)), \quad
    \phi_v(t,u_0,u) := (uu_0, \phi_{\lambda v}(t,u_0,u))
\eeq
which are the adaptation functions in \eqref{eq:lamRankOne} with additional
components for the second-order statistics $uu_0$ and $vv_0$.
Since $\phi_{\lambda u}(t,\cdot)$ and $\phi_{\lambda v}(t,\cdot)$
are pseudo-Lipschitz of order $p$, so are
$\phi_{u}(t,\cdot)$ and $\phi_{v}(t,\cdot)$.
Taking the empirical
means over each of the two components of $\phi_u(\cdot)$ and $\phi_v(\cdot)$,
and applying \eqref{eq:lamRankOne} and \eqref{eq:alphauv}, we see that
if $\nu_u(t)$ and $\nu_v(t)$ are defined as in \eqref{eq:genNu},
\begin{subequations}  \label{eq:nuuvpf}
\beqa
    \nu_u(t) &=& \frac{1}{n}\sum_{j=1}^n \phi_v(t,v_{0j},v_j(t)) =
        (\alpha_{v1}(t),\lambda_u(t)) \\
    \nu_v(t) &=& \frac{1}{m}\sum_{i=1}^m \phi_u(t,u_{0i},u_i(\tp1))
     = (\alpha_{u1}(\tp1),\lambda_v(t))
\eeqa
\end{subequations}
Therefore, $\nu_u(t)$ and $\nu_v(t)$ are vectors containing the parameters
$\lambda_u(t)$ and $\lambda_v(t)$ for the factor selection functions
in lines \ref{line:Guopt} and \ref{line:Gvopt} of Algorithm \ref{algo:iterMax}
as well as the second-order statistics $\alpha_{u1}(t)$ and $\alpha_{v1}(t)$.
Now, for $\nu_u = (\alpha_{v1},\lambda_u)$ and $\nu_v = (\alpha_{u1},\lambda_v)$
define the scalar functions
\begin{subequations} \label{eq:Huvdefpf}
\beqa
    H_u(t,b,u_0,\nu_u) &:=& G_u(\sqrt{\tau_w} b + \beta\alpha_{v1}u_0,\lambda_u)
        \label{eq:Hudefpf} \\
    H_v(t,d,v_0,\nu_v) &:=& G_v(\sqrt{\tau_w} d + \alpha_{u1}v_0,\lambda_v).
        \label{eq:Hvdefpf}
\eeqa
\end{subequations}
Since $G_u(p,\lambda_u)$ and $G_v(q,\lambda_v)$ satisfy the continuity
conditions in Assumption \ref{as:rankOne}(c),
$H_u(t,b,u_0,\lambda_u)$ and $H_v(t,d,v_0,\lambda_v)$ satisfy Assumption
\ref{as:Huv}.
In addition, the componentwise separability assumption in \eqref{eq:GuvSep}
implies that the updates in lines
\ref{line:Guopt} and \ref{line:Gvopt} of Algorithm~\ref{algo:iterMax}
can be rewritten as
\beq \label{eq:Guvpf}
    u_i(\tp1) = G_u(p_i(t),\lambda_u(t)), \quad
    v_j(\tp1) = G_v(d_j(t),\lambda_v(t)).
\eeq
Thus, combining \eqref{eq:Huvdefpf} and \eqref{eq:Guvpf}
with the definitions of $\bbf(t)$ and $\dbf(t)$ in
\eqref{eq:bdefpf} and \eqref{eq:ddefpf}, we obtain
\beq  \label{eq:Huvpf}
    u_i(\tp1) = H_u(t,b_i(t),u_{0i},\nu_u(t)), \quad
    v_j(\tp1) = H_v(t,d_j(t),v_{0j},\nu_v(t)).
\eeq

Next observe that
\beqa
    \lefteqn{ \xi_u(\tp1) \stackrel{(a)}{=} \mu_v(\tp1)/\sqrt{\tau_w}
    \stackrel{(b)}{=} -\frac{\sqrt{\tau_w}}{m}\sum_{j=1}^n
    \frac{\partial}{\partial q_j}G_v(q_j(t),\lambda_v(t)) } \nonumber \\
     &\stackrel{(c)}{=}& -\frac{1}{m}\sum_{j=1}^n
    \frac{\partial}{\partial d_j}H_v(t,d_j(t),\nu_v(t))
    \hspace{3cm} \label{eq:xiuH}
\eeqa
where (a) follows from the definition of $\xi_u(t)$ in \eqref{eq:bSpf};
(b) is the setting for $\mu_u(\tp1)$ in line~\ref{line:muu};
and (c) follows from the definition of $H_v(t,d)$ in \eqref{eq:Hvdefpf}.
Similarly, one can show that
\beq \label{eq:xivH}
    \xi_v(t) = -\frac{1}{m}\sum_{i=1}^m
    \frac{\partial}{\partial b_i}H_u(t,b_i(t),\nu_u(t)).
\eeq

Equations \eqref{eq:bSpf}, \eqref{eq:dSpf}, \eqref{eq:nuuvpf},
\eqref{eq:Huvdefpf}, \eqref{eq:xiuH} and \eqref{eq:xivH} exactly match the
recursions in equations \eqref{eq:genAMPRec} to \eqref{eq:genNu}.
Therefore, Theorem~\ref{thm:SEGen} shows that
the limits \eqref{eq:thetauvLim} hold in a sense that
the sets $\theta_u(t)$ and $\theta_v(t)$ converge
empirically with bounded moments of order $p$.

We next show that the limits $U(t)$ and $V(t)$
on the right-hand side of \eqref{eq:thetauvLim} match the descriptions in
\eqref{eq:USE} and \eqref{eq:VSE}.  First, define $\alphabar_{u1}(t)$
and $\alphabar_{v1}(t)$ in \eqref{eq:alphaSE} and $\lambdabar_u(t)$
and $\lambdabar_v(t)$ as in \eqref{eq:lambdaSE}.  Then,
from \eqref{eq:phiuvpf}, the expectations $\nubar_u(t)$ and
$\nubar_v(t)$ in \eqref{eq:nuv} are given by
\beq \label{eq:nubarpf}
    \nubar_u(t) = (\alphabar_{v1}(t),\lambdabar_u(t)), \quad
    \nubar_v(t) = (\alphabar_{u1}(\tp1),\lambdabar_v(t)).
\eeq
Using \eqref{eq:UtGen}, \eqref{eq:Hudefpf} and \eqref{eq:nubarpf},
we see that
\beqa
    U(\tp1) &=& H_u(t,B(t),U_0,\nubar_u(t)) \nonumber \\
        &=& G_u\bigl(t, \beta\alphabar_{v1}(t)U_0 + \sqrt{\tau_w} B(t),
        \lambdabar_u(t) \bigr),
            \label{eq:UtGB}
\eeqa
where $B(t) \sim {\mathcal N}(0,\tau^b(t))$.
Therefore,  if we let $Z_u(t) = \sqrt{\tau_w} B(t)$,
then $Z_u(t)$ is zero mean Gaussian with variance
\[
    \Exp\left[Z_u^2(t)\right] = \tau_w \tau^b(t)
    \stackrel{(a)}{=} \beta \tau_w \Exp[V(t)^2]
    \stackrel{(b)}{=} \beta \tau_w \alphabar_{v0}(t),
\]
where follows from \eqref{eq:taub} and (b) follows from
the definition of $\alphabar_{v0}(t)$ in \eqref{eq:alphaSE}.
Substituting  $Z_u(t) = \sqrt{\tau_w} B(t)$ into \eqref{eq:UtGB} we obtain
the model for $U(\tp1)$ in \eqref{eq:USE}.
Similarly, using \eqref{eq:VtGen} and \eqref{eq:Hvdefpf},
we can obtain the model $V(\tp1)$ in \eqref{eq:VSE}.
Thus, we have proven that the random variables $(U_0,U(t))$
and $(V_0,V(t))$ are described by \eqref{eq:USE} and \eqref{eq:VSE},
and this completes the proof.

\subsection{Proof of Theorem~\ref{thm:linConv}} \label{sec:linConvPf}

The theorem is proven by simply evaluating the second order statistics.
We begin with $\alphabar_{u1}(\tp1)$:
\beqa
    \lefteqn{ \alphabar_{u1}(\tp1) \stackrel{(a)}{=}
    \Exp[ U_0U(\tp1) ] \stackrel{(b)}{=} \lambdabar_u(t) \Exp[ U_0P(t) ] } \nonumber \\
    &\stackrel{(c)}{=}& \lambdabar_u(t) \Exp\left[ U_0(\beta\alphabar_{v1}(t)U_0+Z_u(t)) \right]
    \stackrel{(d)}{=}  \lambdabar_u(t)\beta\tau_u\alphabar_{v1}(t) \label{eq:alphau1Lin}
\eeqa
where (a) is the definition in \eqref{eq:alphaSE};
(b) follows from \eqref{eq:USE} and \eqref{eq:Guvlin};
(c) follows from \eqref{eq:USE}; and (d) follows from the independence of
$Z_u(t)$ and $U_0$ and the definition of $\tau_u$ in \eqref{eq:tauuv}.
Similarly, one can show that
\beq \label{eq:alphau0Lin}
 \alphabar_{u0}(\tp1) = \lambdabar_u(t)^2\left[
    \beta\tau_u\alphabar^2_{v1}(t) + \beta\tau_w\alphabar_{v0}(t)\right].
\eeq
Substituting \eqref{eq:alphau1Lin} and \eqref{eq:alphau0Lin} into \eqref{eq:corruLim},
we obtain the asymptotic correlation coefficient
\beqa
    \lefteqn{\rho_u(\tp1) = \frac{\lambdabar^2_u(t)\beta^2\tau_u^2\alphabar^2_{v1}(t)}
       {\lambdabar^2_u(t)\left[
    \beta\tau_u\alphabar^2_{v1}(t) + \beta\tau_w\alphabar_{v0}(t)\right]\tau_u} }
    \nonumber \\
    &=&  \frac{\beta\tau_u\alphabar_{v1}^2(t)}
       {\tau_u\alphabar^2_{v1}(t) + \tau_w\alphabar_{v0}(t)}
    =  \frac{\beta\tau_u\tau_v\rho_{v}(t)}
       {\tau_u\tau_v\rho_{v}(t) + \tau_w},
        \nonumber
\eeqa
where the last step follows from \eqref{eq:corrvLim}.
This proves the first equation in  \eqref{eq:corrLinSE}.

A similar set of calculations shows that
\begin{subequations}
\beqa
    \alphabar_{v1}(\tp1) &=&
    \lambdabar_v(t)\tau_v\alphabar_{u1}(\tp1) \\
     \alphabar_{v0}(\tp1) &=&  \lambdabar_v(t)^2\left[
    \tau_v\alphabar^2_{u1}(\tp1) + \tau_w\alphabar_{u0}(t)\right].
\eeqa
\end{subequations}
Applying these equations into \eqref{eq:corruLim} and \eqref{eq:corrvLim},
we obtain the recursion \eqref{eq:corrLinSE}.
Hence, we have proven part (a) of the theorem.

For part (b), we need the following simple lemma.

\begin{lemma} \label{lem:monoiter}
Suppose that $H:[0,1] \arr [0,1]$ is continuous and monotonically increasing,
and $x(t)$ is a sequence satisfying the recursive relation
\beq \label{eq:monoiter}
    x(t+1) = H(x(t)),
\eeq
for some initial condition $x(0) \in [0,1]$.  Then, either $x(t)$ monotonically
increases or decreases to some $x^* = H(x^*)$.
\end{lemma}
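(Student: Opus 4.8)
The plan is to establish monotonicity of the sequence by induction and then invoke the monotone convergence theorem, closing with a continuity argument that identifies the limit as a fixed point. The structural fact that drives everything is that a monotonically increasing $H$ \emph{preserves the order of consecutive iterates}: if $x(t+1) \geq x(t)$, then applying $H$ to both sides gives $x(t+2) = H(x(t+1)) \geq H(x(t)) = x(t+1)$, and the analogous implication holds with both inequalities reversed. It is this property---and not mere continuity---that allows a single comparison to propagate through the whole sequence.

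First I would compare the first two terms, $x(1) = H(x(0))$ and $x(0)$, and split into cases. If $x(1) = x(0)$, then $x(0)$ is already a fixed point, the sequence is constant, and the claim holds trivially with $x^* = x(0)$. If $x(1) > x(0)$, I would show by induction that $x(t+1) \geq x(t)$ for all $t$: the base case is the assumption $x(1) \geq x(0)$, and the inductive step is exactly the order-preservation property above. Hence $x(t)$ is monotonically non-decreasing. Symmetrically, if $x(1) < x(0)$, the same argument shows that $x(t)$ is monotonically non-increasing. In either case the sequence is monotone, and since it lies in the bounded interval $[0,1]$, the monotone convergence theorem guarantees that the limit $x^* := \lim_{t \arr \infty} x(t)$ exists and lies in $[0,1]$.

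Finally, I would pass to the limit in the recursion $x(t+1) = H(x(t))$. Because $H$ is continuous on $[0,1]$ and $x(t) \arr x^*$ with $x^* \in [0,1]$, the right-hand side satisfies $H(x(t)) \arr H(x^*)$, while the left-hand side satisfies $x(t+1) \arr x^*$; equating the two limits yields $x^* = H(x^*)$. There is no substantial obstacle here, as the result is a standard fact about iteration of a monotone self-map of an interval. The only point requiring a little care is the inductive step, where one must invoke the monotonicity of $H$ rather than its continuity; and if ``monotonically increasing'' is read non-strictly, one should keep all the inequalities non-strict throughout, which does not affect the conclusion.
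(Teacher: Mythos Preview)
Your argument is correct and is the standard proof of this classical fact: establish monotonicity of the iterates by induction from the monotonicity of $H$, apply the monotone convergence theorem on $[0,1]$, and use continuity of $H$ to identify the limit as a fixed point. The paper itself does not spell out a proof here but simply refers the reader to an earlier reference, so there is nothing further to compare; your write-up would serve perfectly well in its place.
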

\begin{proof} This can be proven similar to \citep[lemma 7]{Rangan:10arXiv}.
\end{proof}

To apply Lemma \ref{lem:monoiter}, observe that the recursions \eqref{eq:corrLinSE}
show that
\[
     \rho_u(\tp1) = \frac{\beta \tau_u\tau_v \rho_v(t)}{
    \beta \tau_u\tau_v \rho_v(t) + \tau_w}
    = \frac{\beta \tau_u^2\tau_v^2 \rho_u(t)}{
    \beta \tau_u^2\tau_v^2 \rho_u(t) + \tau_w(\tau_u\tau_v \rho_u(t) + \tau_w)}.
\]
So, if we define
\beq \label{eq:HdefLin}
    H(\rho_u) := \frac{\beta \tau_u^2\tau_v^2 \rho_u}{
    (\beta \tau_u^2\tau_v^2 + \tau_w\tau_u\tau_v)\rho_u + \tau_w^2},
\eeq
then it follows from \eqref{eq:corrLinSE} that
$\rho_u(\tp1) = H(\rho_u(t))$ for all $t$.
By taking the derivative, it can be checked that
$H(\rho_u)$ is monotonically increasing.
It follows from Lemma \ref{lem:monoiter} that $\rho_u(t) \arr \rho_u^*$
for some fixed point $\rho_u^* = H(\rho_u^*)$ with
$\rho_u^* \in [0,1]$.

Now, there are only two
fixed point solutions to $\rho_u^* = H(\rho_u^*)$: $\rho_u^* = 0$ and
\beq \label{eq:corruLinLimitPos}
    \rho_u^* = \frac{\beta\tau_u^2\tau_v^2 - \tau_w^2}{
    \tau_u\tau_v(\beta\tau_u\tau_v + \tau_w)}.
\eeq
When
\beq \label{eq:snrMinPf}
    \beta\tau_u^2\tau_v^2 \leq \tau_w^2,
\eeq
then $\rho_u^*$ in \eqref{eq:corruLinLimitPos} is not positive,
so the only fixed solution in $[0,1]$ is $\rho_u^* = 0$.  Therefore, when
\eqref{eq:snrMinPf} is not satisfied, $\rho_u(t)$ must converge to the
zero fixed point: $\rho_u(t) \arr 0$.

Now, suppose that \eqref{eq:snrMinPf} is satisfied.
In this case, we claim that $\rho_u(t) \arr \rho_u^*$ where $\rho_u^*$ is
in \eqref{eq:corruLinLimitPos}.  We prove this claim by contradiction and
suppose, instead,  that $\rho_u(t)$ converges to the other fixed point:
$\rho_u(t) \arr 0$.
Since Lemma \ref{lem:monoiter} shows that $\rho_u(t)$ must be either
monotonically increasing or decreasing, the only way $\rho_u(t) \arr 0$
is that $\rho_u(t)$ monotonically decreases to zero.
But, when \eqref{eq:snrMinPf} is satisfied, it can be checked that for
$\rho_u(t)$ sufficiently small and positive, $\rho_u(\tp1) > H(\rho_u(t))$.
This contradicts the fact that $\rho_u(t)$ is monotonically decreasing,
and therefore, $\rho_u(t)$ must converge to the other fixed point $\rho_u^*$
in \eqref{eq:corruLinLimitPos}.

Hence, we have shown that when \eqref{eq:snrMinPf} is not satisfied,
$\rho_u(t) \arr 0$, and when \eqref{eq:snrMinPf} is satisfied
$\rho_u(t) \arr \rho_u^*$ in \eqref{eq:corruLinLimitPos}.  This is equivalent to
the first limit in \eqref{eq:corrLinLimit}.  The second limit \eqref{eq:corrLinLimit}
is proved similarly.

\subsection{Proof of Theorem \ref{thm:SEmmse}} \label{sec:SEmmsePf}

Similar to the proof of Theorem \ref{thm:linConv}, we begin by computing
the second-order statistics of $(U_0,U(t))$.
Since $U(t) = \Exp[U_0 \MID P(\tm1)]$, $U(t)$ must be uncorrelated with the
error:  $U(t)(U_0-U(t))=0$.  Hence,
\beq
 \alphabar_{u0}(t) - \alphabar_{u1}(t) =
    \Exp[U(t)U_0 - U(t)U(t)] = 0,
    \label{eq:aueqmmse}
\eeq
and therefore $\alphabar_{u0}(t) = \alphabar_{u1}(t)$.
Now, consider the measurement $P(t)$ in \eqref{eq:USE}.  The
SNR in this channel is
\beq \label{eq:snru}
    \eta_u(t) = \frac{\beta^2\alphabar^2_{v1}(t)}{\beta\tau_w\alphabar_{v0}(t)}
    = \frac{\beta \tau_v\rho_v(t)}{\tau_w}.
\eeq
Since $U(\tp1)$ is the conditional expectation of $U_0$ given $P(t)$,
the mean-squared error is given by $\Ecal_u(\eta_u(t))$ defined in \eqref{eq:mmseuv}.
Therefore,
\beqa
     \Ecal_u(\eta_u(t)) = \Exp[U(\tp1)-U_0]^2
   \stackrel{(a)}{=} \alphabar_{u0}(\tp1)-2\alphabar_{u1}(\tp1) + \tau_u
    \stackrel{(b)}{=} \tau_u - \alphabar_{u0}(\tp1),  \label{eq:mmseau0}
\eeqa
where (a) follows from expanding the square and substituting in the definitions in
\eqref{eq:alphaSE} and \eqref{eq:tauuv}; and
(b) follows from the fact that $\alphabar_{u0}(\tp1) = \alphabar_{u1}(\tp1)$
proven above.  We have thus proven that
\beq \label{eq:alphaummse}
    \alphabar_{u0}(\tp1) = \alphabar_{u1}(\tp1) = \tau_u - \Ecal_u(\eta_u(t)).
\eeq
Therefore, the asymptotic correlation coefficient is given by
\beqa
\rho_u(\tp1) \stackrel{(a)}{=} \frac{\alphabar_{u1}^2(\tp1)}{
    \tau_u\alphabar_{u0}(\tp1)}
     \stackrel{(b)}{=} 1 - \tau_u^{-1}\Ecal_u(\eta_u(t)), \label{eq:corruMmsePf}
\eeqa
where (a) follows from \eqref{eq:corruLim} and (b) follows from
\eqref{eq:alphaummse}.  Substituting in \eqref{eq:snru} into
\eqref{eq:corruMmsePf} proves the first equation in \eqref{eq:corrSEmmse}.
The second recursion in \eqref{eq:corrSEmmse} can be proven similarly.

For the initial condition in the recursion, observe that with $V(0) = \Exp[V_0]$,
the second order statistics are given by
\beqan
    \alphabar_{v0}(0) = \Exp[ V(0)^2] = (\Exp[V_0])^2, \quad
    \alphabar_{v1}(0) = \Exp[ V_0V(0)] = (\Exp[V_0])^2.
\eeqan
Hence, from \eqref{eq:corrvLim}, the initial correlation coefficient is
\[
    \rho_v(0) = \frac{\alphabar^2_{v1}(0)}{\tau_v\alphabar_{v0}(0)} =
        \frac{(\Exp[V_0])^2}{\tau_v},
\]
which agrees with the statement in the theorem.
This proves part (a).

To prove part (b), we again use Lemma \ref{lem:monoiter}.
Define the functions
\beq \label{eq:Huvmmse}
    H_u(\rho_v) := 1 - \tau_u^{-1}\Ecal_u(\beta \tau_v\rho_v/\tau_w),
    \quad
    H_v(\rho_u) := 1 - \tau_v^{-1}\Ecal_v(\tau_u\rho_u/\tau_w),
\eeq
and their concatenation
\beq \label{eq:Hmmse}
    H(\rho_v) = H_v(H_u(\rho_v)).
\eeq
From \eqref{eq:corrSEmmse}, it follows that $\rho_v(\tp1) = H(\rho_v(t))$.
Now, $\Ecal_u(\eta_u)$ and $\Ecal_v(\eta_v)$ defined in
\eqref{eq:mmseuv} are the mean-squared
errors of $U_0$ and $V_0$ under AWGN estimation measurements with SNRs
$\eta_u$ and $\eta_v$.  Therefore, $\Ecal_u(\eta_u)$ and $\Ecal_v(\eta_v)$
must be monotonically decreasing in $\eta_u$ and $\eta_v$.  Therefore,
$H_u(\rho_v)$ and $H_v(\rho_u)$ in \eqref{eq:Huvmmse}
are monotonically increasing functions and thus so is the concatenated
function $H(\rho_v)$ in \eqref{eq:Hmmse}.
Also, since the assumption of part (b) is that $\Ecal_u(\eta_u)$
and $\Ecal_v(\eta_v)$ are continuous, $H(\rho_v)$ is also continuous.
It follows from Lemma~\ref{lem:monoiter} that $\rho_v(t) \arr \rho_v^*$
where $\rho_v^*$ is a fixed point of \eqref{eq:corrSEmmse}.

It remains to show $\rho_v^* > 0$. Observe that
\beqa
    \Ecal_v(\eta_v) \stackrel{(a)}{=} \Exp[V_0 \MID Y=\sqrt{\eta_n}V_0+D]
     \leq  \var(V_0) \stackrel{(b)}{=}
    \Exp[V_0^2] - (\Exp[V_0])^2 = \tau_v(1-\rho_v(0)), \nonumber
\eeqa
where (a) follows from the definition of $\Ecal_v(\eta_v)$ in \eqref{eq:mmseuv}
and (b) follows from the definition of $\tau_v$ in \eqref{eq:tauuv} and
the initial condition $\rho_v(0) = (\Exp[V_0])^2/\tau_v$.
It follows from \eqref{eq:corrSEmmse} that
\[
    \rho_v(\tp1) = 1-\frac{1}{\tau_v}\Ecal_v(\eta_v(t)) \geq \rho_v(0).
\]
Therefore, the limit point $\rho_v^*$ of $\rho_v(t)$ must satisfy
$\rho_v^* \geq \rho_v(0) > 0$.

\subsection{Proof of Lemma \ref{lem:mmseZero}} \label{sec:mmseZeroPf}
Define the functions $H_u$, $H_v$ and $H$ as in \eqref{eq:Huvmmse} and
\eqref{eq:Hmmse} from the previous proof.   We know that $\rho_v(\tp1) = H(\rho_v(t))$.
When $\Exp[U_0] = \Exp[V_0] = 0$, the $\rho_v = 0$ is a fixed point of the update.
We can determine the stability of this fixed point by computing the derivative
of $H(\rho_v)$ at $\rho_v=0$.

Towards this end, we first use a
standard result (see, for example, \citep{GuoV:05}) that, for any prior on $U_0$
with bounded second moments, the mean-squared error in \eqref{eq:mmseuv} satisfies
\beq \label{eq:mmseuLow}
    \Ecal_u(\eta_u) = \frac{\tau_u}{1 + \eta_u\tau_u} + O(\eta_u^2).
\eeq
The term $\tau_u/(1+\eta_u\tau_u)$ is the minimum mean-squared error
with linear estimation of $U_0$ from an AWGN noise-corrupted measurement
$Y = \sqrt{\eta_u}U_0 + D$, $D \sim {\mathcal N}(0,1)$.
Equation \eqref{eq:mmseuLow} arises from the
fact that linear estimation is optimal in low SNRs -- see \citep{GuoV:05} for details.
Using \eqref{eq:mmseuLow}, we can compute the derivative of $H_u$,
\beqa
    H_u'(0) = -\frac{1}{\tau_u}\left. \frac{\partial}{\partial \rho_v}
        \Ecal_u(\beta \tau_v\rho_v/\tau_w) \right|_{\rho_v = 0}
        = -\frac{\beta \tau_v}{\tau_u\tau_w}\Ecal_u'(0)
        = \frac{\beta \tau_u \tau_v}{\tau_w} \label{eq:mmseuDeriv}
\eeqa
Similarly one can show that
\beq \label{eq:mmsevDeriv}
    H_v'(0) = \frac{\tau_u \tau_v}{\tau_w},
\eeq
and hence
\beq \label{eq:HzeroDeriv}
    H'(0) = H_v'(0)H_u'(0) = \frac{\beta \tau_u^2 \tau_v^2}{\tau_w^2}.
\eeq

We now apply a standard linearization analysis of the nonlinear system
$\rho_v(\tp1) = H(\rho_v(t))$ around the fixed point $\rho_v=0$.
See, for example, \citep{Vidyasagar:78}.
If $\sqrt{\beta}\tau_u\tau_v < \tau_w$ then $H'(0) < 1$ and the fixed point
is stable.  Thus, for any $\rho_v(0)$ sufficiently small $\rho_v(t) \arr 0$.
This proves part (b) of the lemma.

On the other hand, if $\sqrt{\beta}\tau_u\tau_v > \tau_w$ then $H'(0) > 1$
and the fixed point is unstable.  This will imply that for any $\rho_v(0) > 0$,
$\rho_v(t)$ will diverge from zero.  But, we know from Theorem \ref{thm:SEmmse}
that $\rho_v(t)$ must converge to some fixed point of $\rho_v = H(\rho_v)$.
So, the limit point must be positive.  This proves part (a) of the lemma.

\bibliographystyle{imaiai}

\newcommand{\SortNoop}[1]{}
\ifx\undefined\BySame
\newcommand{\BySame}{\leavevmode\rule[.5ex]{3em}{.5pt}\ }
\fi
\ifx\undefined\textsc
\newcommand{\textsc}[1]{{\sc #1}}
\newcommand{\emph}[1]{{\em #1\/}}
\let\tmpsmall\small
\renewcommand{\small}{\tmpsmall\sc}
\fi

\end{document}